\def\@comment{\let\do\@makeother \dospecials\catcode`\^^M=10\def\par{}}
\def\begincomment{\@comment\@xcomment}
\newcommand{\myspace}{\vspace*{-0.5em}}
\begin{document}

% KRISH SUGGESTED TITLEs:
%	
%	Strategies as Decision Trees in Graph Games.
%	
%	Strategy Representation by Decision Trees in Graph Games.
%	
%	Strategy Representation by Decision Trees in Reactive Synthesis.

\title{Strategy Representation by Decision Trees in~Reactive~Synthesis}

\author{Tom\'{a}\v{s} Br\'{a}zdil$^1$, Krishnendu Chatterjee$^2$, Jan K\v{r}et\'{i}nsk\'{y}$^3$, and Viktor Toman$^2$}
\institute{$^1$ Masaryk University, Brno, Czech Republic \\
              $^2$ Institute of Science and Technology Austria \\
              $^3$ Technical University of Munich, Germany\vspace{-2ex}}

\maketitle

\begin{abstract}
Graph games played by two players over finite-state graphs are central
in many problems in computer science. 
In particular, graph games with $\omega$-regular winning conditions, 
specified as parity objectives, which can express properties such as safety, 
liveness, fairness, are the basic framework for verification and synthesis 
of reactive systems.
The decisions for a player at various states of the graph game are represented 
as strategies. 
While the algorithmic problem for solving graph games with parity objectives
has been widely studied, the most prominent data-structure for strategy representation
in graph games has been binary decision diagrams (BDDs).
However, due to the bit-level representation, BDDs do not retain the inherent
flavor of the decisions of strategies, and are notoriously hard to minimize to 
obtain succinct representation.
In this work we propose decision trees for strategy representation in graph games.
Decision trees retain the flavor of decisions of strategies and allow entropy-based
minimization to obtain succinct trees. 
However, decision trees work in settings (e.g., probabilistic models) where errors 
are allowed, and overfitting of data is typically avoided.
In contrast, for strategies in graph games no error is allowed, and the decision 
tree must represent the entire strategy.
We develop new techniques to extend decision trees to overcome the above obstacles,
while retaining the entropy-based techniques to obtain succinct trees.
We have implemented our techniques to extend the existing decision tree solvers.
We present experimental results for problems in reactive synthesis to show
that decision trees provide a much more efficient data-structure for strategy 
representation as compared to BDDs.
\end{abstract}

\section{Introduction}

\smallskip\noindent{\em Graph games.}
We consider nonterminating two-player graph games played 
on finite-state graphs.
The vertices of the graph are partitioned into states controlled by 
the two players, namely, player~1 and player~2, respectively. 
In each round the state changes according to a transition 
chosen by the player controlling the current state.
Thus, the outcome of the game being played for an infinite 
number of rounds, is an infinite path through the graph, which is 
called a play.
An objective for a player specifies whether the resulting play 
is either winning or losing.
We consider zero-sum games where the objectives of the players are 
complementary.
A strategy for a player is a recipe to specify the choice of the 
transitions for states controlled by the player.
Given an objective, a winning strategy for a player from a state ensures 
the objective irrespective of the strategy of the opponent.

\smallskip\noindent{\em Games and synthesis.}
These games play a central role in several areas of computer science.
One important application arises when the vertices and edges of a graph 
represent the states and transitions of a reactive system, and the two 
players represent controllable versus uncontrollable decisions during the
execution of the system.
The \emph{synthesis} problem for reactive systems asks for the construction 
of a winning strategy in the corresponding graph game.
This problem was first posed independently by Church~\cite{Church62} 
and B\"uchi~\cite{Buchi62}, and has been extensively 
studied~\cite{Rabin69,BuchiLandweber69,GH82,McNaughton93}.
Other than applications in synthesis of discrete-event and reactive 
systems~\cite{RamadgeWonham87,PnueliRosner89}, game-theoretic formulations 
play a crucial role in modeling~\cite{Dill89book,ALW89}, refinement~\cite{FairSimulation}, 
verification~\cite{DetectingErrorsBeforeReaching,AHK02}, testing~\cite{GurevichTest}, 
compatibility checking \cite{InterfaceAutomata}, 
and many other applications.
In all the above applications, the objectives are $\omega$-regular, 
and the $\omega$-regular sets of infinite paths provide an important and robust 
paradigm for reactive-system specifications~\cite{MannaPnueliVol1,Thomas97}.

\smallskip\noindent{\em Parity games.}
Graph games with parity objectives are relevant in reactive synthesis, 
since all common specifications for reactive systems are expressed as $\omega$-regular
objectives that can be transformed to parity objectives.
In particular, a convenient specification formalism in reactive synthesis 
is LTL (linear-time temporal logic). 
The LTL synthesis problem asks, given a specification over input and output variables in LTL,
whether there is a strategy for the output sequences to ensure the specification irrespective of the
behavior of the input sequences.
The conversion of LTL to non-deterministic B\"uchi automata, and non-deterministic 
B\"uchi automata to deterministic parity automata, gives rise to a parity game to solve
the LTL synthesis problem.
Formally, the algorithmic problem asks for a given graph game with a parity objective 
and a starting state, whether player~1 has a winning strategy. This problem is  central in verification and synthesis.
While it is a major open problem whether the problem can be solved in polynomial 
time, it has been widely studied in the literature~\cite{Zie98,CaludeJKLS17,Schewe17}.
%,Jur00,JPZ06

\smallskip\noindent{\em Strategy representation.}
In graph games, the strategies are the most important objects as they 
represent the witness to winning of a player.
For example, winning strategies represent controllers in the controller synthesis problem.
Hence all parity-games solvers produce the winning strategies as their output.
While the algorithmic problem of solving parity games has received huge attention,
quite surprisingly, data-structures for representation of strategies have 
received little attention.
While the data-structures for strategies could be relevant in particular algorithms for parity games 
(e.g., strategy-iteration algorithm), our focus is very different than improving such algorithms.
Our main focus is the representation of the strategies themselves, which are the main output 
of the parity-games solvers, and hence our strategy representation serves as post-processing
of the output of the solvers.
The standard data-structure for representing strategies is binary decision diagrams
(BDDs)~\cite{Akers78,Bryant86} and it is used as follows:
a strategy is interpreted as a lookup table of pairs that specifies for every controlled 
state of the player the transition to choose, and then the lookup table is 
represented as a binary decision diagram (BDD).

\smallskip\noindent{\em Strategies as BDDs.}
The desired properties of data-structures for strategies are as follows: 
(a)~{\em succinctness}, i.e., small strategies are desirable, since strategies 
correspond to controllers, and smaller strategies represent efficient controllers 
that are required in resource-constrained environments such as embedded systems; 
(b)~{\em explanatory}, i.e., the representation explains the decisions of 
the strategies. 
In this work we consider different data-structure for representation of strategies 
in graph games. 
The key drawbacks of BDDs to represent strategies in graph games are as follows.
First, the size of BDDs crucially depends on the variable ordering.
The variable ordering problem is notoriously difficult:  
the optimal variable ordering problem is NP-complete, and for large dimensions 
no heuristics are known to work well.
Second, due to the fact that strategies have to be input to the BDD construction
as Boolean formulae, the representation though succinct, does not retain
the inherent important choice features of the decisions of the strategies (for 
an illustration see Example~\ref{ex:stratdt}).

\smallskip\noindent{\em Strategies as decision trees.}
In this work, we propose to use {\em decision trees}, i.e.~\cite{Mitchell1997}, 
for strategy representation in graph games. 
A decision tree is a structure similar to a BDD, but with nodes labelled by various 
predicates over the system's variables.
In the basic algorithm for decision trees, the tree is constructed using an 
unfolding procedure where the branching for the decision making is done in 
order to maximize the information gain at each step. 

The key advantages of decision trees over BDDs are as follows:
\begin{compactitem} 
\item The first two advantages are conceptual.
First, while in BDDs, a level corresponds to one variable, in decision trees,
a predicate can appear at different levels and different predicates 
can appear at the same level.
This allows for more flexibility in the representation.
Second, decision trees utilize various predicates over the given features
in order to make decisions, and ignore all the unimportant features.
Thus they retain the inherent flavor of the decisions of the strategies. 

\item The other important advantage is algorithmic.
Since the data-structure is based on information gain, sophisticated 
algorithms based on entropy exist for their construction. These algorithms 
result in a succinct representation, whereas for BDDs there is no 
good algorithmic approach for variable reordering.

\end{compactitem}

\smallskip\noindent{\em Key challenges.}
While there are several advantages of decision trees, and decision trees have been 
extensively studied in the machine learning community, there are several 
key challenges and obstacles for representation of strategies in graph games 
by decision trees. 
\begin{compactitem}
\item First, decision trees have been mainly used in the probabilistic 
setting. In such settings, research from the machine learning community 
has developed techniques to show that decision trees can be effectively 
pruned to obtain succinct trees, while allowing small error probabilities.
However, in the context of graph games, no error is allowed in the strategic choices.

\item Second, decision trees have been used in the machine learning 
community in classification, where an important aspect is to ensure 
that there is no overfitting of the training data. 
In contrast, in the context of graph games, the decision tree must fit the 
entire representation of the strategies.
\end{compactitem}
While for probabilistic models such as Markov decision processes (MDPs),
decision trees can be used as a blackbox~\cite{DBLP:conf/cav/BrazdilCCFK15}, 
in the setting of graph games their use is much more challenging. 
In summary, in previous settings where decision trees are used 
small error rates are allowed in favor of succinctness, and overfitting is not
permitted, whereas in our setting no error is allowed, and the complete fitting of the 
tree has to be ensured.
The basic algorithm for decision-tree learning (called ID3 algorithm \cite{DBLP:journals/ml/Quinlan86,Mitchell1997})
suffers from  the curse of dimensionality, and the error allowance is used to handle the dimensionality. 
Hence we need to develop new techniques for strategy learning with decision trees in graph 
games.

\smallskip\noindent{\em Our techniques.}
We present a new technique for learning strategies with decision trees based on 
{\em look-ahead}.
In the basic algorithm for decision trees, at each step of the 
unfolding, the algorithm proceeds as long as there is any information gain.
However, suppose for no possible branching there is any information gain.
This represents the situation where the local (i.e., one-step based) decision making fails 
to achieve information gain. 
We extend this process so that look-ahead is allowed, i.e., we consider possible 
information gain with multiple steps.
The look-ahead along with complete unfolding ensure that there is no error in the strategy representation.
While the look-ahead approach provides a systematic principle to 
obtain precise strategy representation, it is computationally expensive, and we 
present heuristics used together with look-ahead for computational efficiency 
and succinctness of strategy representation.

\smallskip\noindent{\em Implementation and experimental results.}
Since in our setting existing decision tree solvers cannot be used as a blackbox,
we extended the existing solvers with our techniques mentioned above. 
We have then applied our implementation to compare decision trees and 
BDDs for representation of strategies for problems in reactive synthesis.
First, we compared our approach against BDDs for two classical examples
of reactive synthesis from SYNTCOMP benchmarks~\cite{DBLP:journals/corr/JacobsBBKPRRSST16}.
Second, we considered randomly generated LTL formulae, and the graph 
games obtained for the realizability of such formulae. 
In both the above experiments the decision trees represent the winning 
strategies much more efficiently as compared to BDDs.

\smallskip\noindent{\em Related work.}
Previous non-explicit representation of strategies for verification or synthesis purposes typically used BDDs~\cite{WBB+10} or automata~\cite{DBLP:conf/atva/Neider11,DBLP:conf/tacas/NeiderT16} 
and do not explain the decisions by the current valuation of variables.
\emph{Decision trees} have been used a lot in the area of machine learning as a classifier that naturally explains a decision \cite{Mitchell1997}.
They have also been considered 
for approximate representation of values in states and thus implicitly for an approximate representation of \emph{strategies}, 
for the model of Markov decision processes (MDPs)~in \cite{DBLP:conf/ijcai/BoutilierDG95,DBLP:conf/icml/BoutilierD96}.
Recently, in the context of verification, this approach has been modified to capture strategies guaranteed to be $\varepsilon$-optimal, for MDPs~\cite{DBLP:conf/cav/BrazdilCCFK15} 
and partially observable MDPs~\cite{BCCGN16}.
Learning a compact decision tree representation of an MDP strategy was also investigated in~\cite{LPRT10} for the case of body sensor networks.
Besides, decision trees are becoming more popular in verification and programming languages in general, for instance, 
they are used to capture program invariants~\cite{DBLP:journals/corr/KrishnaPW15,DBLP:conf/popl/0001NMR16}.
To the best of our knowledge, decision trees were only used in the context of (possibly probabilistic) systems with only a single player. 
Our decision-tree approach is thus the first in the game setting with two players that is required in reactive synthesis.

\smallskip\noindent{\em Summary.}
To summarize, our main contributions are:
\begin{compactenum}
\item We propose decision trees as data-structure for strategy representation 
in graph games.

\item The representation of strategies with decision trees poses many obstacles,
as in contrast to the probabilistic setting no error is allowed in games.
We present techniques that overcome these obstacles while still retaining the 
algorithmic advantages (such as entropy-based methods) of decision trees 
to obtain succinct decision trees.  

\item We extend existing decision tree solvers with our techniques and present
experimental results to demonstrate the effectiveness of our approach in reactive
synthesis.
 
\end{compactenum}
%Further details and proofs can be found in~\cite{ThisFullVersion}.

\newcommand{\stp}{\mathcal{V}}

\section{Graph Games and Strategies}\label{sec:prelim}

\smallskip\noindent{\bf Graph games.}
A {\em graph game} consists of a tuple $G=\tuple{S,S_1,S_2,A_1,A_2,\trans}$, 
where:
\vspace{-1mm}
\begin{itemize}
\item $S$ is a finite set of states partitioned into player 1 states $S_1$ %(namely, $S_1$) 
and player 2 states $S_2$; %(namely, $S_2$);
\item $A_1$ (resp., $A_2$) is the set of actions for 
player~1 (resp., player~2);  and 
\item $\trans \colon (S_1 \times A_1) \cup (S_2 \times A_2) \to S$ is the transition function 
that given a player 1 state and a player 1 action, or a player 2 state and 
a player 2 action, gives the successor state.
\end{itemize}
\vspace{-1mm}

\smallskip\noindent{\bf Plays.}
A \emph{play} is an infinite sequence of state-action pairs 
$\tuple{s_0 a_0 s_1 a_1 \ldots}$ such that for all $j \geq 0$ we have that 
if $s_j \in S_i$ for $i\in \set{1,2}$, then $a_j \in A_i$ and 
$\trans(s_j,a_j)=s_{j+1}$.
We denote by $\Plays(G)$ the set of all plays of a graph game $G$.

\smallskip\noindent{\bf Strategies.} 
A strategy is a recipe for a player to choose actions to extend finite prefixes of plays.
Formally, a strategy $\straa$ for player~1 is a function 
$\straa \colon S^\star \cdot S_1 \to A_1$ that given a finite sequence of visited states 
chooses the next action.
The definitions for player~2 strategies $\strab$ are analogous.
We denote by $\Straa(G)$ and $\Strab(G)$ the set of all strategies for player~1 and 
player~2 in graph game $G$, respectively.
Given strategies $\straa \in \Straa(G)$ and $\strab \in \Strab(G)$, and a starting 
state $s$ in $G$, there is a unique play $\pat(s,\straa,\strab)=\tuple{s_0 a_0 s_1 a_1 \ldots}$
such that $s_0=s$ and for all $j \geq 0$ if $s_j \in S_1$ (resp., $s_j \in S_2$) 
then $a_j=\straa(\tuple{s_0  s_1 \ldots s_j})$
(resp.,  $a_j=\strab(\tuple{s_0 s_1 \ldots s_j})$).
A {\em memoryless} strategy is a strategy that does not depend on the 
finite prefix of the play but only on the current state, i.e., 
functions $\straa \colon S_1 \to A_1$ and $\strab \colon S_2 \to A_2$.

\smallskip\noindent{\bf Objectives.} An \emph{objective} for
a graph game $G$ is a set $\varphi \subseteq \Plays(G)$.
We consider the following objectives:
\begin{itemize}
\item \emph{Reachability and safety objectives.} 
A reachability objective is defined by a set $T \subseteq S$ of target states, and 
the objective requires that a state in $T$ is visited at least once. 
Formally, 
$\Reach(F)= \set{\tuple{s_0 a_0 s_1 a_1 \ldots} \in \Plays(G) \mid \exists i :\, s_i\in T}$.  
The dual of reachability objectives are safety objectives, 
defined by a set $F \subseteq S$ of safe states, and the objective requires that only 
states in $F$ are visited.  
Formally, 
$\Safe(F)=\set{\tuple{s_0 a_0 s_1 a_1 \ldots} \in \Plays(G) \mid \forall i :\, s_i \in F}$.

\item \emph{Parity objectives.} 
For an infinite play $\pat$ we denote by $\Inf(\pat)$ the set of states that 
occur infinitely often in $\pat$. 
Let $p \colon S \to \Nats$ be a \emph{priority function}.  
The \emph{parity} objective $\Parity(p) = \{\pat \in \Plays(G) \mid \min\{p(s) 
\mid s \in \Inf(\pat)\} \text{ is even }\}$ requires that the minimum of the 
priorities of the states visited infinitely often be even. 
\end{itemize}

\smallskip\noindent{\bf Winning region and strategies.} 
Given a game graph $G$ and an objective $\varphi$, a {\em winning} strategy
$\straa$ from state $s$ for player~1 is a strategy such that for all 
strategies $\strab \in \Strab(G)$ we have  $\pat(s,\straa,\strab) \in \varphi$.
Analogously, a winning strategy $\strab$ for player~2 from $s$ ensures that
for all strategies $\straa \in \Straa(G)$ we have $\pat(s,\straa,\strab) 
\not\in \varphi$.
The {\em winning region} $W_1(G,\varphi)$ (resp., $W_2(G,\overline{\varphi})$) 
for player~1 (resp., player~2) is the set of states such that player~1 (resp., player~2) 
has a winning strategy.
A fundamental result for graph games with parity objectives shows that
the winning regions form a partition of the state space, and if there is 
a winning strategy for a player, then there is a memoryless winning strategy~\cite{EJ91}.

\smallskip\noindent{\bf LTL synthesis and objectives.}
Reachability and safety objectives are the most basic objectives to specify 
properties of reactive systems.
Most properties that arise in practice for analysis of reactive systems are 
$\omega$-regular objectives. A convenient logical framework to express 
$\omega$-regular objectives is the LTL (linear-time temporal logic) framework.
The problem of synthesis from specifications, in particular, LTL synthesis has
received huge attention~\cite{ModelCheckBook}. 
LTL objectives can be translated to parity automata, and the synthesis problem 
reduces to solving games with parity objectives.

\smallskip

In reactive synthesis it is natural to consider games where the state space
is defined by a set of variables, and the game is played by input and 
output player who choose the respective input and output signals.
We describe such games below that easily correspond to graph games.

\smallskip\noindent{\bf I/O games with variables.}
Consider a finite set $X=\set{x_1,x_2,\ldots,x_n}$ of variables from a finite domain; for simplicity, we consider Boolean variables only.
A {\em valuation} is an assignment to each variable, in our case $2^X$
denotes the set of all valuations.
Let $X$ be partitioned into input signals, output signals, and state variables, i.e., $X=I\uplus O\uplus V$.
Consider the alphabet $\inp=2^{I}$ (resp., $\oup=2^{O}$) where each letter represents
a subset of the input (resp., output) signals and the alphabet $\stp=2^V$ where each letter represents a subset of state variables.
The input/output choices affect the valuation of the variables, which is given by 
the next-step valuation function $\Trans \colon \stp \times \inp \times \oup \to \stp$.
Consider a game played as follows: at every round the input player chooses
a set of input signals (i.e., a letter from $\inp$), and given the input choice the
output player chooses a set of output signals (i.e., a letter from $\oup$). 
The above game can be represented as a graph game $\tuple{S,S_1,S_2,A_1,A_2,\trans}$ 
as follows:
%\vspace{-1mm}
\begin{itemize}
\item $S=\stp \cup (\stp\times \inp)$;
\item player~1 represents the input player and $S_1=\stp$;
player~2 represents the output player and $S_2=\stp \times \inp$;
\item $A_1=\inp$ and $A_2=\oup$; and 
\item given a valuation $v \in \stp$ and $a_1 \in A_1$ we have $\trans(v,a_1)=(v,a_1)$,
and for $a_2 \in A_2$ we have $\trans((v,a_1),a_2)=\Trans(v,a_1,a_2)$.
\end{itemize}
%\vspace{-1mm}

In this paper, we use decision trees to represent memoryless strategies in such graph games, where states are
represented as vectors of Boolean values. In Section \ref{sec:exper} we show how such games arise from various
sources (AIGER specifications~\cite{AIGER}, LTL synthesis) and why it is sufficient to consider memoryless strategies only.

\newcommand{\good}{\mathit{YES}}%{\mathit{good}}
\newcommand{\bad}{\mathit{NO}}%{\mathit{bad}}

\section{Decision Trees and Decision Tree Learning}\label{sec:repr}
In this section we recall decision trees and learning decision trees. 
A key application domain of games on graphs is reactive synthesis (such as safety synthesis from SYNTCOMP benchmarks as well 
as LTL synthesis) and our comparison for strategy representation is against BDDs.
BDDs are particularly suitable for states and actions represented as bitvectors.
Hence for a fair comparison against BDDs, we consider a simple version of decision trees over bitvectors,
though decision trees and their corresponding methods can be naturally extended to richer domains 
(such as vectors of integers as used in~\cite{DBLP:conf/cav/BrazdilCCFK15}).

\subsubsection{Decision trees.}
A \emph{decision tree} over $\{0,1\}^d$ is a tuple $\mathcal{T}=(T,\rho,\theta)$ where $T$ is a finite rooted binary (ordered) tree
with a set of inner nodes $N$ and a~set of leaves $L$, $\rho$ assigns to every inner node a number of $\{1,\ldots,d\}$, and $\theta$
assigns to every leaf a value $\good$ or $\bad$.

The language $\mathcal{L}(\mathcal{T})\subseteq \{0,1\}^d$ of the tree is defined as follows. For a vector $\vec{x}=(x_1,\ldots, x_d)\in \{0,1\}^d$,
we find a path $p$ from the root to a leaf such that for each inner node $n$ on the path, $\vec{x}(\rho(n))=0$ iff the~first child of $n$ is on $p$.
Denote the leaf on this particular path by $\ell$. Then $\vec{x}$ is in the language $\mathcal{L}(\mathcal{T})$ of $\mathcal T$ iff $\theta(\ell)=\good$.

\begin{example}
Consider dimension $d=3$. The language of the tree depicted in Fig.~\ref{fig:dectree} can be described by the following regular
expression $\{0,1\}^2\cdot 0 + \{0,1\}\cdot 1\cdot 1$. Intuitively, the root node represents the predicate of the third value, the other
inner node represents the predicate of the second value. For each inner node, the first and second children correspond to the cases
where the value at the position specified by the predicate of the inner node is $0$ and $1$, respectively. We supply the edge labels to
depict the tree clearly. The leftmost leaf corresponds to the subset of $\{0,1\}^3$ where the third value is $0$, the rightmost leaf
corresponds to the subset of $\{0,1\}^3$ where the third value is $1$ and the second value is $1$.
\end{example}

%\begin{figure}
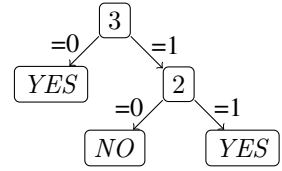
\begin{wrapfigure}{r}{0.38\textwidth}
\begin{center}
\begin{tikzpicture}[node distance = 1.2cm]
	\node (a1) [draw,rectangle,rounded corners=2pt] {$3$} ;
	\node (a2) [below right of=a1,draw,rectangle,rounded corners=2pt]  {$2$};
	\node (good1) [below right of=a2,draw,rectangle,rounded corners=2pt] {$\good$};
	\node (bad1) [below left of=a2,draw,rectangle,rounded corners=2pt] {$\bad$};
	\node (good2) [below left of=a1,draw,rectangle,rounded corners=2pt] {$\good$};
	
	\draw [->] (a1) -- node [xshift=0.25cm, yshift=0.1cm] {=1} (a2);
	\draw [->] (a1) -- node [xshift=-0.25cm, yshift=0.1cm] {=0} (good2);
	\draw [->] (a2) -- node [xshift=0.25cm, yshift=0.1cm] {=1} (good1);
	\draw [->] (a2) -- node [xshift=-0.25cm, yshift=0.1cm] {=0} (bad1);
\end{tikzpicture}

	\caption{A decision tree over $\{0,1\}^3$}
	\label{fig:dectree}
\end{center}
\end{wrapfigure}
%\end{figure}

% DEFAULT -- \setlength{\belowdisplayskip}{10pt} \setlength{\abovedisplayskip}{10pt}
%\setlength{\belowdisplayshortskip}{0pt} \setlength{\abovedisplayshortskip}{0pt}
\setlength{\belowdisplayskip}{4pt} \setlength{\abovedisplayskip}{4pt}
\smallskip\noindent{\bf Standard DT learning.}
We describe the standard process of binary classification using decision trees (see Algorithm~\ref{alg:ID3}).
Given a {\em training set} $\mathit{Train}\subseteq \{0,1\}^d$, partitioned into two subsets $\mathit{Good}$ and $\mathit{Bad}$,
the~process of learning according to the algorithm ID3 \cite{DBLP:journals/ml/Quinlan86,Mitchell1997} computes a decision
tree $\mathcal{T}$ that assigns $\good$ to all elements of $\mathit{Good}$ and $\bad$ to all elements of $\mathit{Bad}$.
In the algorithm, a leaf $\ell\subseteq \{0,1\}^d$ is {\em mixed} if $\ell$ has a non-empty intersection with both $\mathit{Good}$
and $\mathit{Bad}$. To split a leaf $\ell$ on $\mathit{bit}\in \{1,\ldots,d\}$  means that $\ell$ becomes an internal node with the
two new leaves $\ell_0$ and $\ell_1$ as its children. Then, the leaf $\ell_0$ contains the samples of $\ell$ where the value in the
position $\mathit{bit}$ equals $0$, and the leaf $\ell_1$ contains the rest of the samples of $\ell$, since these have the value in
the position $\mathit{bit}$ equal to $1$. The \emph{entropy} of a node is defined as
\[
H(\ell) = - \frac{|\ell\cap\mathit{Good}|}{|\ell|} log_2 \frac{|\ell\cap\mathit{Good}|}{|\ell|} - \frac{|\ell\cap\mathit{Bad}|}{|\ell|} log_2 \frac{|\ell\cap\mathit{Bad}|}{|\ell|}
\]
\setlength{\belowdisplayskip}{0pt} \setlength{\abovedisplayskip}{0pt}
An {\em information gain} of a given $\mathit{bit}\in \{1,\ldots,d\}$ (and thus also of the split into $\ell_0$ and $\ell_1$) is defined by
			\begin{equation}
			\label{eq:information_gain}
			H(\ell) - \frac{|\ell_0|}{|\ell|} H(\ell_0) - \frac{|\ell_1|}{|\ell|} H(\ell_1)
			\end{equation}
			where $\ell_0$ is the set of all $\vec{x}=(x_1,\ldots,x_{d})\in \ell\subseteq \{0,1\}^{d}$ with $x_{\mathit{bit}}=0$ and $\ell_1=\ell\smallsetminus \ell_0$. Finally, given $\ell\subseteq \{0,1\}^d$ we define
			\[
			\mathit{maxclass}(\ell) = \begin{cases}
				\good & |\ell\cap\mathit{Good}|\geq |\ell\cap\mathit{Bad}|\\
				\bad & \text{otherwise.}
			\end{cases}
			\]
\begin{algorithm}[!t]
	\caption{ID3 learning algorithm\label{alg:ID3}}
	\begin{algorithmic}[1]
		\Statex \textbf{Inputs:} $\mathit{Train}\subseteq\{0,1\}^{d}$ partitioned into subsets $\mathit{Good}$ and $\mathit{Bad}$.
		\Statex \textbf{Outputs:} A decision tree $\mathcal{T}$ such that $\mathcal{L}(\mathcal{T})\cap\mathit{Train}=\mathit{Good}$.
		\Statex /* train $\mathcal{T}$ on positive set $\mathit{Good}$ and negative set $\mathit{Bad}$ */
		\State $\mathcal{T}\gets(\{\mathit{Train}\},\emptyset,\{\mathit{Train}\mapsto^\theta\good\})$ 
		\While{a mixed leaf $\ell$ exists}  
		\State $\mathit{bit}\gets$ an element of $\{1,\ldots,d\}$ that maximizes the information gain
		\State split $\ell$ on $\mathit{bit}$ into two leaves $\ell_0$ and $\ell_1$, $\rho(\ell) = \mathit{bit}$
		\State $\theta(\ell_0)\gets maxclass(\ell_0)$ and $\theta(\ell_1)\gets maxclass(\ell_1)$
		\EndWhile	
		\State\Return $\mathcal{T}$
	\end{algorithmic}
\end{algorithm}
\setlength{\belowdisplayskip}{6pt} \setlength{\abovedisplayskip}{8pt}

Intuitively, the splitting on the component with the highest gain splits the set so that it maximizes the
portion of $\mathit{Good}$ in one subset and the portion of $\mathit{Bad}$ in the other one.
%good data -- bad data, we save a line like this

\begin{remark}[Optimizations]\label{rem:opt}
The basic ID3 algorithm for decision tree learning suffers from the curse of dimensionality. 
However, decision trees are primarily applied to machine learning problems where small errors
are allowed to obtain succinct trees. 
Hence the allowance of error is crucially used in existing solvers (such as WEKA~\cite{WEKA})
to combat dimensionality.
In particular, the error rate is exploited in the unfolding, where the unfolding proceeds 
only when the information gain exceeds the error threshold.
Further error is also introduced in the pruning of the trees, which ensures that the overfitting of training data
is avoided.
\end{remark}

\section{Learning Winning Strategies Efficiently}\label{sec:learn}
In this section we present our contributions. 
We first start with the representation of strategies as training sets,
and then present our strategy decision-tree learning algorithm.

\subsection{Strategies as Training Sets and Decision Trees}

\noindent{\bf Strategies as training sets.}
Let us consider a game $G=\tuple{S,S_1,S_2,A_1,A_2,\trans}$. We represent strategies of both players using the same method.
So in what follows we consider either of the players and denote by $S_{*}$ and $A_{*}$ the sets of states and actions
of the player, respectively. We fix $\tilde{\sigma} \colon S_*\rightarrow A_*$, a memoryless strategy of the player. 

We assume that $G$ is an I/O game with binary variables, which means $S_*\subseteq \{0,1\}^{n}$ and $A_*\subseteq \{0,1\}^{a}$.
A memoryless strategy is then a partial function \mbox{$\tilde{\sigma} \colon \{0,1\}^{n} \rightarrow \{0,1\}^{a}$}. Furthermore, we fix an initial
state $s_0$, and let $S^R_* \subseteq \{0,1\}^{n}$ be the set of all states reachable from $s_0$ using $\sigma$ against some
strategy of the other player. We consider all objectives only on plays starting in the initial state $s_0$. Therefore, the strategy
can be seen as a function $\sigma \colon S^R_*\rightarrow A_*$ such that $\sigma=\tilde{\sigma}|_{S^R_*}$.

\vbox{
Now we define
\begin{compactitem}
\item $\mathit{Good} = \{\tuple{s,\sigma(s)}\in S^R_*\times A_*\}$
\item $\mathit{Bad} = \{\tuple{s,a} \in S^R_*\times A_* \mid a\not = \sigma(s)\}$
\end{compactitem}
The set of all training examples is a disjunctive union $\mathit{Train} = \mathit{Good} \uplus \mathit{Bad} \subseteq \{0,1\}^{n+a}$.
}

As we do not use any pruning or stopping rules, the ID3 
algorithm returns a decision tree $\mathcal{T}$ that fits the training
set $\mathit{Train}$ exactly. This means that for all $s\in S^R_*$ we have that $\tuple{s,a}\in \mathcal{L}(\mathcal{T})$ iff $\sigma(s)=a$.
Thus $\mathcal{T}$ represents the strategy $\sigma$. Note that for any sample of $\{0,1\}^{n+a} \setminus \mathit{Train}$, the fact
whether it belongs to $\mathcal{L}(\mathcal{T})$ or not is immaterial to us.
Thus strategies are naturally represented as decision trees, and we present an illustration below.

\setlength{\intextsep}{2pt}
\setlength{\abovecaptionskip}{0pt}
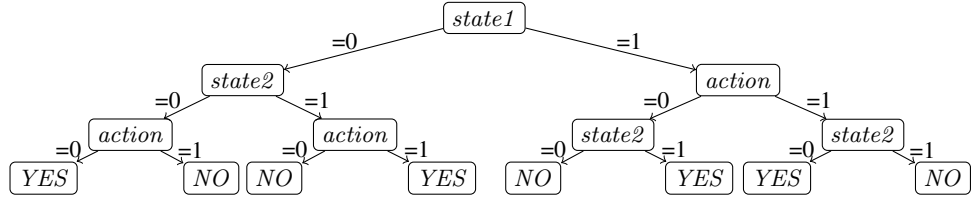
\begin{figure}
\begin{center}
\vspace{4pt}
\begin{tikzpicture}[]
	\node (a1) [draw,rectangle,rounded corners=2pt] {$\mathit{state1}$};
    \node (dummy) [below=0.5cm of a1] {};
    \node (a2) [left=2.55cm of dummy,draw,rectangle,rounded corners=2pt] {$\mathit{state2}$};
    \node (dummy2) [below=0.4cm of a2] {};
    \node (a3) [right=2.7cm of dummy,draw,rectangle,rounded corners=2pt]  {$\mathit{action}$};
    \node (dummy3) [below=0.4cm of a3] {};
    \node (a4) [left=0.82cm of dummy2,draw,rectangle,rounded corners=2pt] {$\mathit{action}$};
	\node (a5) [right=0.82cm of dummy2,draw,rectangle,rounded corners=2pt] {$\mathit{action}$};
	\node (a6) [left=1cm of dummy3,draw,rectangle,rounded corners=2pt] {$\mathit{state2}$};
    \node (a7) [right=1cm of dummy3,draw,rectangle,rounded corners=2pt] {$\mathit{state2}$};
	
    \node (good1) [below left=0.2cm of a4,draw,rectangle,rounded corners=2pt] {$\good$};
    \node (bad1) [below right=0.2cm of a4,draw,rectangle,rounded corners=2pt] {$\bad$};
	
    \node (good2) [below right=0.2cm of a5,draw,rectangle,rounded corners=2pt] {$\good$};
    \node (bad2) [below left=0.2cm of a5,draw,rectangle,rounded corners=2pt] {$\bad$};
	
    \node (good3) [below right=0.2cm of a6,draw,rectangle,rounded corners=2pt] {$\good$};
    \node (bad3) [below left=0.2cm of a6,draw,rectangle,rounded corners=2pt] {$\bad$};
	
    \node (good4) [below left=0.2cm of a7,draw,rectangle,rounded corners=2pt] {$\good$};
    \node (bad4) [below right=0.2cm of a7,draw,rectangle,rounded corners=2pt] {$\bad$};

    \draw [->] (a1) -- node [xshift=-0.25cm, yshift=0.1cm] {=0} (a2);
	\draw [->] (a1) -- node [xshift=0.25cm, yshift=0.1cm] {=1} (a3);
	
    \draw [->] (a2) -- node [xshift=-0.25cm, yshift=0.1cm] {=0} (a4);
	\draw [->] (a2) -- node [xshift=0.25cm, yshift=0.1cm] {=1} (a5);
	
	\draw [->] (a3) -- node [xshift=-0.25cm, yshift=0.1cm] {=0} (a6);
    \draw [->] (a3) -- node [xshift=0.25cm, yshift=0.1cm] {=1} (a7);
	
    \draw [->] (a4) -- node [xshift=-0.25cm, yshift=0.1cm] {=0} (good1);
    \draw [->] (a4) -- node [xshift=0.25cm, yshift=0.1cm] {=1} (bad1);
	
    \draw [->] (a5) -- node [xshift=-0.25cm, yshift=0.1cm] {=0} (bad2);
    \draw [->] (a5) -- node [xshift=0.25cm, yshift=0.1cm] {=1} (good2);
	
    \draw [->] (a6) -- node [xshift=-0.25cm, yshift=0.1cm] {=0} (bad3);
    \draw [->] (a6) -- node [xshift=0.25cm, yshift=0.1cm] {=1} (good3);
	
    \draw [->] (a7) -- node [xshift=-0.25cm, yshift=0.1cm] {=0} (good4);
    \draw [->] (a7) -- node [xshift=0.25cm, yshift=0.1cm] {=1} (bad4);
\end{tikzpicture}
\end{center}
	\caption{Tree representation of strategy $\sigma$}
	\label{fig:strategyrepr}
\end{figure}

\begin{example}\label{ex:stratdt}
Let the state binary variables be labeled as $\mathit{state1}$, $\mathit{state2}$, and $\mathit{state3}$, respectively,
and let the action binary variable be labeled as $\mathit{action}$. Consider a strategy $\sigma$ such that
\mbox{$\sigma(0,0,0)=0$, $\;\;\sigma(0,1,0)=1$, $\;\;\sigma(1,0,0)=1$, $\;\;\sigma(1,1,1)=0$}. Then 
\begin{compactitem}
\item $\mathit{Good} = \{(0,0,0,0), (0,1,0,1), (1,0,0,1), (1,1,1,0)\}$
\item $\mathit{Bad} = \{(0,0,0,1), (0,1,0,0), (1,0,0,0), (1,1,1,1)\}$
\end{compactitem}
 Fig.~\ref{fig:strategyrepr}
depicts a decision tree $\mathcal{T}$ representing the strategy~$\sigma$.
%In particular, $S^R_* = \{(0,0,0),(0,1,0),(1,0,0),(1,1,1)\}$.
\end{example}

\begin{remark}
The above example demonstrates the conceptual advantages of decision trees over BDDs. 
First, in decision trees, different predicates can appear at the same level of the tree
(e.g. predicates $\mathit{state2}$ and $\mathit{action}$ appear at the second level). At the same time, a predicate can appear at
different levels of the tree (e.g. predicate $\mathit{action}$ appears once at the second level and twice at the third level).

Second advantage is a bit technical, but very crucial. In the example there is no pair of samples
$g \in \mathit{Good}$ and $b \in \mathit{Bad}$ that differs only in the value of $\emph{state3}$. This suggests that
the feature $\emph{state3}$ is unimportant w.r.t. differentiating between $\mathit{Good}$ and $\mathit{Bad}$, and indeed the decision
tree~$\mathcal{T}$ in Fig.~\ref{fig:strategyrepr} contains no predicate $\emph{state3}$ while still representing~$\sigma$.
However, to construct a BDD that ignores $\emph{state3}$ is very difficult,
since a Boolean formula is provided as the input to the BDD construction,
and this formula inevitably sets the value for every sample.
Therefore, it is impossible to declare
``the samples of $\{0,1\}^{n+a} \setminus \mathit{Train}$ can be resolved either way''. One way to construct a BDD~$\mathcal{B}$
would be $\mathcal{B} \equiv \bigvee_{g \in \mathit{Good}} g$. But then $\mathcal{B}(0,0,0,0)=1$ and $\mathcal{B}(0,0,1,0)=0$, so
$\emph{state3}$ has to be used in the representation of~$\mathcal{B}$. Another option could be
$\mathcal{B} \equiv \bigwedge_{b \in \mathit{Bad}} \neg b$, but then $\mathcal{B}(0,0,0,1)=0$ and $\mathcal{B}(0,0,1,1)=1$,
so $\emph{state3}$ still has to be used in the representation.
\end{remark}

\begin{example}
Consider $\mathit{Good} = \{(0,0,0,0,1)\}$ and $\mathit{Bad} = \{(0,0,0,0,0)\}$.
Algorithm~\ref{alg:ID3}
outputs a simple decision tree differentiating between $\mathit{Good}$ and $\mathit{Bad}$ only according to the value
of the last variable. On the other hand, a BDD constructed as $\mathcal{B} \equiv \bigvee_{g \in \mathit{Good}} g$
contains nodes for all five variables.
\end{example}
%A decision-tree construction algorithm provides

\newcommand{\argmax}{\mathop{\mathrm{arg\,max}}}
\newcommand{\argmin}{\mathop{\mathrm{arg\,min}}}

\subsection{Strategy-DT Learning}

\subsubsection{Challenges.} 
In contrast to other machine learning domains, where errors are 
allowed,  since strategies in graph games must be represented precisely, 
several challenges arise.
Most importantly, the machine-learning philosophy of classifiers is to generalize the experience, 
trying to achieve good predictions on any (not just training) data.
In order to do so, overfitting the training data must be avoided. 
Indeed, specializing the classifier to cover the training data precisely leads to classifiers reflecting the concrete instances of random noise instead of generally useful predictors.
Overfitting is prevented using a tolerance on learning all details of the training data.
Consequently, the training data are not learnt exactly.
Since in our case, the training set is exactly what we want to represent, our approach must be entirely different.
In particular, the optimizations in the setting where errors are allowed (see Remark~\ref{rem:opt}) are not applicable
to handle the curse of dimensionality.
In particular, it may be necessary to unfold the decision tree even in situations where none of the one-step unfolds induces any information gain.

\myspace\myspace

\subsubsection{Solution: look-ahead.}
In the ID3 algorithm Alg.~\ref{alg:ID3}, when none of the splits 
has a positive information gain (see Formula~(\ref{eq:information_gain})), the corresponding node is split arbitrarily.
This can result in very large decision trees. We propose a better solution.
Namely, we extend ID3 with a \emph{``look-ahead''}: If no split results in a positive information gain, one can pick a split so that next,
when splitting the children, the information gain is positive. If still no such split exists, one can try and pick a split and splits of
children so that afterwards there is a split of grandchildren with positive information gain. 
And so on, possibly until a constant depth $k$, yielding a \emph{$k$-look-ahead}.

Before we define the look-ahead formally, we have a look at a simple example:

\begin{example}
	Consider $\mathit{Good}=\{(0,0,0,0,0,1,1),(0,0,0,0,0,0,0)\}$ and $\mathit{Bad}=\{(0,0,0,0,0,1,0),(0,0,0,0,0,0,1)\}$, characterising $x_6=x_7$.
	Splitting on any $x_i$, $\;i~\in~\{1,...,7\}$ does not give a positive information gain.
	Standard DT learning procedures would either stop here and not expand this leaf any more, or split arbitrarily.
	With the look-ahead, one can see that using $x_6$ and then $x_7$, the information gain is positive and we obtain a decision tree classifying the set perfectly.
	
	Here we could as well introduce more complex predicates such as $x_6\ \mathtt{xor}\ x_7$ instead of look-ahead.
	However, in general the look-ahead has the advantage that each of the $0$ and $1$ branches may afterwards split on
	different bits (currently best ones), whereas with $x_6\ \mathtt{xor}\ x_7$ we commit to using $x_7$ in both branches. 
\end{example}

The example illustrates the 2-look-ahead with the following formal definition. (For explanatory reasons, the general case follows afterwards.)
Consider a node $\ell\subseteq\{0,1\}^d$.
For every $\mathit{bit},\mathit{bit}_0,\mathit{bit}_1\in \{1,\ldots,d\}$, consider splitting on $\mathit{bit}$ and subsequently the $0$-child
on $\mathit{bit}_0$ and the $1$-child on $\mathit{bit}_1$. This results in a partition
$P(\mathit{bit},\mathit{bit}_0,\mathit{bit}_1)=\{\ell_{00},\ell_{01},\ell_{10},\ell_{11}\}$  of $\ell$. We assign to
$P(\mathit{bit},\mathit{bit}_0,\mathit{bit}_1)$ its {\em 2-look-ahead information gain} defined by\myspace
\begin{align*}
\mathit{IG}(\mathit{bit},& \mathit{bit}_0,\mathit{bit}_1)=\\
& H(\ell) - \frac{|\ell_{00}|}{|\ell|} H(\ell_{00}) - \frac{|\ell_{01}|}{|\ell|} H(\ell_{01})
- \frac{|\ell_{10}|}{|\ell|} H(\ell_{10}) - \frac{|\ell_{11}|}{|\ell|} H(\ell_{11})
\end{align*}
The {\em 2-look-ahead information gain of $\mathit{bit}\in \{1,\ldots,d\}$} is defined as \myspace
\[
\mathit{IG}(\mathit{bit})=\max_{\mathit{bit}_0,\mathit{bit}_1} \mathit{IG}(\mathit{bit},\mathit{bit}_0,\mathit{bit}_1)
\]
We say that $\mathit{bit}\in \{1,\ldots,d\}$ {\em maximizes the 2-look-ahead information gain} if \[
\mathit{bit}\in \argmax \mathit{IG}
\]

In general, we define the \emph{$k$-step weighted entropy} of a node $\ell\subseteq\{0,1\}^d$ with respect to a predicate $\mathit{bit}\in \{1,\ldots,d\}$ by
\begin{align*}
\mathit{WE^k}(\ell,\mathit{bit})=
\min_{\mathit{bit}_0,\mathit{bit}_1} &
\mathit{WE^{k-1}}(\{\vec x\in\ell\mid x_{bit}=0\},\mathit{bit_0})\\+&
\mathit{WE^{k-1}}(\{\vec x\in\ell\mid x_{bit}=1\},\mathit{bit_1})
\end{align*}\myspace\myspace
and
\[\mathit{WE^0}(\ell,\mathit{bit})=|\ell|\cdot H(\ell)\]
Then we say that $\hat{\mathit{bit}}\in \{1,\ldots,d\}$ {\em maximizes the $k$-look-ahead information gain in $\ell$} if \[
\hat{\mathit{bit}}\in \argmax_{bit\in \{1,\ldots,d\}} \left(H(\ell)-\mathit{WE^k}(\ell,\mathit{bit})/|\ell|\right)
=\argmin \mathit{WE^k}(\ell,\cdot)
\]

Note that 1-look-ahead coincides with the choice of split by ID3. For a fixed $k$, if the information gain for
each $i$-look-ahead, $i \leq k$ is zero, we split based on a heuristic on Line~\ref{alg:learn:heur} of Algorithm~\ref{alg:learn}.
This heuristic is detailed on in the following subsection.
Note that Algorithm~\ref{alg:learn} is correct-by-construction since we enforce representation of the
entire input training set. We present a formal correctness proof in Appendix~\ref{app:algcorr}.%\cite[Appendix~B]{ThisFullVersion}

\begin{remark}[Properties of look-ahead algorithm]
We now highlight some desirable properties of the look-ahead algorithm.
\begin{compactitem}
\item {\em Incrementality.} First, the algorithm presents an incremental 
approach: computation of the $k$-look-ahead can be done by further refining 
the results of the $(k-1)$-look-ahead analysis due to the recursive nature 
of our definition.
Thus the algorithm can start with $k=2$ and increase $k$ only when required.

\item {\em Entropy-based minimization.} Second, the look-ahead approach
naturally extends the predicate choice of ID3, and thus the entropy-based 
minimization for decision trees is still applicable.

\item {\em Reduction of dimensionality.} 
Finally, Algorithm~\ref{alg:learn} uses the look-ahead method in an incremental fashion, 
thus only considering more complex ``combinations'' when necessary. 
Consequently, we do not produce all these combinations of predicates in advance, 
and avoid the problem of too high dimensionality and only experience local blowups.
\end{compactitem}
In general, $k$-look-ahead clearly requires resources exponential in $k$. However, in our benchmarks, it was typically
sufficient to apply the look-ahead for $k$ equal to two, which is computationally feasible.
\end{remark}
A different look-ahead-based technique was considered in order to dampen the greedy nature of decision tree
construction~\cite{Elomaa2003}, examining the predicates yielding the highest information gains.  
In contrast, our technique retains the greedy approach but focuses on the case where none of the predicates
provides any information gain itself at all and thus  ID3-based techniques fail to advance.
The main goal of our technique is to capture strong dependence between the features of the training set, in
order to solve a different problem than the one treated by~\cite{Elomaa2003}.
Moreover, the look-ahead description in \cite{Elomaa2003} is very informal, which prevents us from implementing
their solution and comparing the two approaches experimentally.

\begin{algorithm}[!t]
	\caption{$k$-look-ahead ID3\label{alg:learn}}
	\begin{algorithmic}[1]
		\Statex \textbf{Inputs:} $\mathit{Train}\subseteq\{0,1\}^{d}$ partitioned into subsets $\mathit{Good}$ and $\mathit{Bad}$.
		\Statex \textbf{Outputs:} A decision tree $\mathcal{T}$ such that $\mathcal{L}(\mathcal{T})\cap\mathit{Train}=\mathit{Good}$.
		\Statex /* train $\mathcal{T}$ on positive set $\mathit{Good}$ and negative set $\mathit{Bad}$ */
		\State $\mathcal{T}\gets(\{\mathit{Train}\},\emptyset,\{\mathit{Train}\mapsto^\theta\good\})$	
		\While{a mixed leaf $\ell$ exists}\label{alg:learn:mixedleaf}
			\If {$\exists \mathit{bit}\in\{1,\ldots,d\}$ with a positive 1-look-ahead information gain}\label{alg:learn:ig}
				\State $\mathit{bit}\gets$ an element of $\{1,\ldots,d\}$ that maximizes
				the 1-look-ahead information gain
				\Statex \Comment maximum information gain is positive
			\Statex $\qquad\vdots$ 
			\ElsIf {$\exists \mathit{bit}\in\{1,\ldots,d\}$ with a positive k-look-ahead information gain} 
			\State $\mathit{bit}\gets$ an element of $\{1,\ldots,d\}$ that maximizes
			the k-look-ahead information gain
			\Statex \Comment maximum $k$-look-ahead information gain is positive	
			\Else
				\State $\mathit{bit}\!\!\gets\!\!\arg\max_{i\in\{1,..,d\}} \max \left\{\!\frac{|\ell[i=0]\cap
				\mathit{Bad}|}{|\ell[i=0]|}\! +\! \frac{|\ell[i=1]\cap \mathit{Good}|}{|\ell[i=1]|}, \! \frac{|\ell[i=0]\cap
				\mathit{Good}|}{|\ell[i=0]|} \! + \! \frac{|\ell[i=1]\cap \mathit{Bad}|}
				{\ell[i=1]|} \right\}$\label{alg:learn:heur}
			\EndIf
			\State split $\ell$ on $\mathit{bit}$ into two leaves $\ell_0$ and $\ell_1$, $\rho(\ell) = \mathit{bit}$\label{alg:learn:rho} 
			\State $\theta(\ell_0)\gets maxclass(\ell_0)$ and $\theta(\ell_1)\gets
			maxclass(\ell_1)$\label{alg:learn:theta} 
		\EndWhile	
		\State\Return $\mathcal{T}$
	\end{algorithmic}
\end{algorithm}

\subsection{Heuristics}\label{subs:heuristics}

\subsubsection{Statistical split-decision.}
The look-ahead mentioned above provides a very systematic principle on
how to resolve splitting decisions. However, the computation can be demanding in
terms of computational resources. Therefore we present a very simple statistical
heuristic that gives us one more option to decide a split. The precise formula
is $\mathit{bit}=$
\[
\argmax_{i \in \{1,..,d\}} \max \!
		\bigg\{ \! \frac{|\ell[i=0]\cap \mathit{Bad}|}{|\ell[i=0]|}  +  \frac{|\ell[i=1]\cap \mathit{Good}|}{|\ell[i=1]|},
				\! \frac{|\ell[i=0]\cap \mathit{Good}|}{|\ell[i=0]|} +  \frac{|\ell[i=1]\cap \mathit{Bad}|}{\ell[i=1]|} \! \bigg\}
\]
Intuitively, we choose a $\mathit{bit}$ that maximizes the portion
of good samples in one subset and the portion of bad samples in the other subset,
which mimics the entropy-based method, and at the same time is very fast to compute.
One can consider using this heuristic exclusively every time the basic ID3-based splitting
technique fails. However, in our experiments, using 2-look-ahead and then (once needed) proceeding
with the heuristic yields better results, and is still computationally undemanding.

\subsubsection{Chain disjunction.}
The entropy-based approach favors the splits where one of the branches contains a completely
resolved data set ($\ell_* \!\subseteq\! \mathit{Good}$ or $\ell_* \!\subseteq\! \mathit{Bad}$), as
this provides notable information gain. Therefore, as the algorithm proceeds, it often happens that
at some point multiple splits provide a resolved data set in one of the branches. We consider
a heuristic that \emph{chains} all such splits together and computes the information gain of the
resulting disjunction. More specifically, when considering each
$\mathit{bit}$ as a split candidate (line~\ref{alg:learn:ig}~of~Algorithm~\ref{alg:learn}),
we also consider (a)~the disjunction of all bits that contain a subset of $\mathit{Good}$
in either of the branches, and (b)~the disjunction of bits containing a subset of $\mathit{Bad}$ in a branch.
Then we choose the candidate that maximizes the information gain.
These two extra checks are very fast to compute, and can improve succinctness and readability of
the decision trees substantially, while maintaining the fact that a decision tree fits its training set exactly.
Appendix~\ref{app:bitshifter} %\cite[Appendix~D]{ThisFullVersion}
provides two examples where the decision tree obtained without
this heuristic is presented, and then the decision tree obtained when using the heuristic is presented.

\section{Experimental Results}\label{sec:exper}

\newcommand{\tilda}{{\sim}}

In our experiments we use two sources of problems reducible to the representation of memoryless strategies in I/O
games with binary variables: AIGER specifications~\cite{AIGER} and LTL specifications~\cite{Pnueli77}.
Given a game, we use an explicit solver to obtain a strategy in the form of a list of played and non-played actions for each state,
which can be directly used as a training set.
Throughout our experiments, we compare succinctness of representation (expressed as the number of inner nodes) using decision trees and BDDs.

We implemented our method in the programming language Java. We used the external library \textsc{CuDD}~\cite{Somenzi15} for the manipulation of BDDs.
We used the Algorithm~\ref{alg:learn} with $k=2$ to compute the decision trees. We obtained all the results on a single machine
with Intel(R) Core(TM) i5-6200U CPU (2.40 GHz) with the heap size limited to 8 GB.

\subsection{AIGER specifications}\label{subs:experaiger}

SYNTCOMP \cite{DBLP:journals/corr/JacobsBBKPRRSST16} is the most important competition of synthesis tools, running yearly since~2014.
Most of the benchmarks have the form of AIGER specifications \cite{AIGER}, describing safety specifications using circuits with input,
output, and latch variables. This reduces directly to the I/O games with variables since the latches describe the current configuration
of the circuit, corresponding to the state variables of the game.
Since the objectives here are safety/reachability, the winning strategies can be computed and guaranteed to be memoryless.

We consider two benchmarks: scheduling of washing cycles in a washing system and a simple bit shifter model
(the latter presented only in Appendix~\ref{app:bitshifter} %\cite[Appendix~D]{ThisFullVersion}
due to space constraints),
introduced in SYNTCOMP 2015~\cite{DBLP:journals/corr/JacobsBBKPRRSST16} and SYNTCOMP 2014, respectively.

\subsubsection{Scheduling of Washing Cycles.}

The goal is to design a centralized controller for a washing system, composed of several tanks running in
parallel~\cite{DBLP:journals/corr/JacobsBBKPRRSST16}. The model of the system is parametrized by the number of tanks,
the maximum allowed reaction delay before filling a tank with water, the delay after which the tank has to be emptied
again, and the number of tanks that share a water pipe. The controller should satisfy a safety objective, that is,
avoid reaching an error state, which means that the objective of the other player is reachability.
In total, we obtain 406 graph games with safety/reachability objectives.
In 394 cases we represent a winning strategy of the safety player, in the remaining 12 cases a winning strategy of the
reachability player. The number of states of the graph games ranges from 30 to 43203, the size of training example
sets ranges from 40 to 3359232.

\setlength{\abovecaptionskip}{0pt}
\begin{figure}
\begin{minipage}{.5\textwidth}
	\begin{center}
	\includegraphics[scale=0.44]{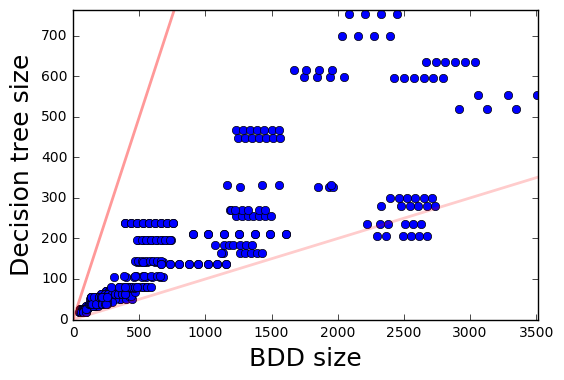}
	\end{center}
\end{minipage}
\begin{minipage}{.5\textwidth}
	\begin{center}
	\includegraphics[scale=0.44]{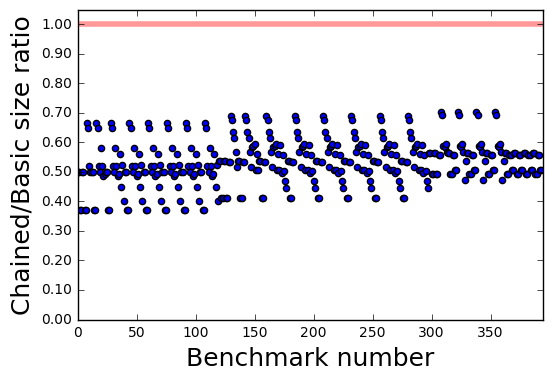}
	\end{center}
\end{minipage}
\caption{Washing cycles -- safety}
\label{fig:washsafe}
\end{figure}

The left plot in Fig.~\ref{fig:washsafe} displays the size of our decision tree representation of the controller winning safety
strategies versus the size of their BDD representations. The decision tree is smaller than the corresponding BDD in all 394 cases.
The arithmetic average ratio of decision tree size and BDD size is $\tilda24\%$, the geometric average is $\tilda22\%$, and the harmonic
average is $\tilda21\%$.

In these experiments, we obtain the BDD representation as follows: we consider 1000 randomly chosen variable orderings
and for each construct a corresponding BDD, in the end we consider the BDD with the minimal size. As a different set of
experiments, we compare against BDDs obtained using several algorithms for variable reordering, namely,
Sift~\cite{Rudell93}, Window4~\cite{Fujita91}, simulated-annealing-based algorithm~\cite{Bollig95}, and a genetic
algorithm~\cite{Drechsler95}. The results with these algorithms are very similar and provided in Appendix~\ref{app:wash}. %\cite[Appendix~C]{ThisFullVersion}
Furthermore, the information about execution time is also provided in Appendix~\ref{app:wash}. %\cite[Appendix~C]{ThisFullVersion}

Moreover, in the experiments described above, we do not use the chain heuristic described in Section~\ref{subs:heuristics}, in order
to provide a fair comparison of decision trees and BDDs. The right plot in Fig.~\ref{fig:washsafe} displays the difference
in decision tree size once the chain heuristic is enabled. Each dot represents the ratio of decision tree size with
and without it.

The decision trees also allow us to get some insight into the winning strategies. Namely, for a fixed number
of water tanks and a fixed empty delay, we obtain a solution that is affected by different values of the fill delay in a minimal
way, and is easily generalizable for all the values of the parameter. This fact becomes more apparent once the chain heuristic
described in Section~\ref{subs:heuristics} is enabled. This phenomenon is not present in the case of BDDs as they differ significantly, even
in size, for different values of the parameter (see Table~\ref{fig:washsafesnippet} in Appendix~\ref{app:wash}). %\cite[Appendix~C]{ThisFullVersion}
For two tanks and empty delay of one, the solution is small enough to be humanly readable
and understandable, see Fig.~\ref{fig:washsafeuniv} (where the fill delay is set to $7$).
Additional examples of the parametric solutions can be found in Appendix~\ref{app:wash}. %\cite[Appendix~C]{ThisFullVersion}
This example suggests that decision tree representation might be useful in solving parametrized synthesis (and verification) problems.

\begin{figure}
%\begin{minipage}{.9\textwidth}
	\begin{center}
	\includegraphics[width=9cm, height=8cm]{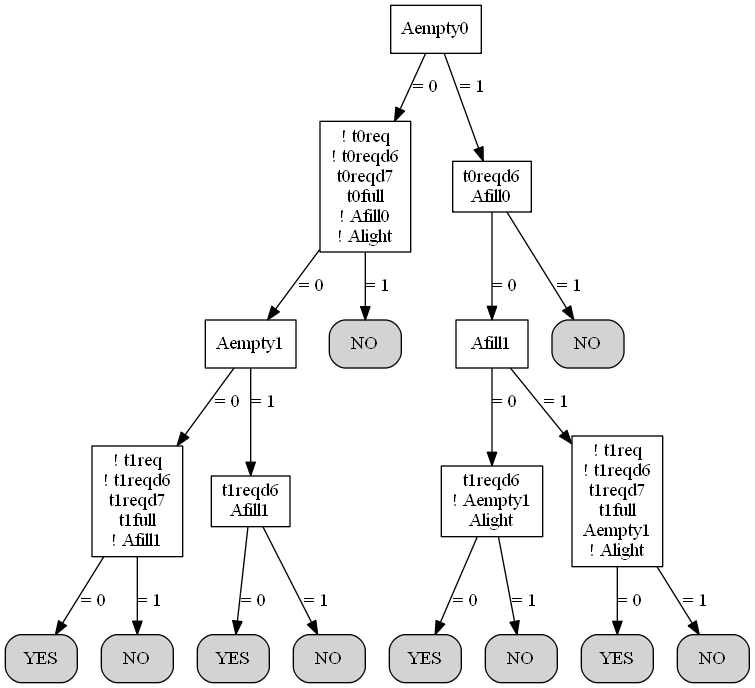}
	\end{center}
%\end{minipage}
\caption{A solution for two tanks and empty delay of one, illustration for fill delay of $7$. Solution for other values $p$ are the same except for replacing values $p$ and $p-1$ for $7$ and $6$, respectively. Thus a parametric solution could be obtained by a simple syntactic analysis of the difference of any two instance solutions.}
\label{fig:washsafeuniv}
\end{figure}

\setlength{\abovecaptionskip}{2pt}
\begin{figure}
\begin{minipage}{.62\textwidth}
%\begin{center}
\begin{tabular}{ l | l | l | l | l || r | r | r}
\hline
Name & $|S|$ & $|I|$ & $|O|$ & $|\mathit{Train}|$ & $|BDD|$ & $|DT|$ & $|DT\!+\!|$ \\
\hline
wash\_3\_1\_1\_3      &      102  &   3  &   7   &       40    &      45  &    3   &  1    \\
wash\_4\_1\_1\_3      &      466  &   4  &   9   &      144    &      76  &    4   &  1    \\
wash\_4\_1\_1\_4      &      346  &   4  &   9   &       96    &      78  &    4   &  1    \\
wash\_4\_2\_1\_4      &      958  &   4  &   9   &      432    &     157  &    4   &  1    \\
wash\_4\_2\_2\_4      &     3310  &   4  &   9   &      432    &     301  &    4   &  1    \\
wash\_5\_1\_1\_3      &     1862  &   5  &  11   &      416    &     127  &    5   &  1    \\
wash\_5\_1\_1\_4      &     1630  &   5  &  11   &      352    &     121  &    5   &  1    \\
wash\_5\_2\_1\_4      &     5365  &   5  &  11   &     2368    &     255  &    5   &  1    \\
wash\_5\_2\_2\_4      &    27919  &   5  &  11   &     2368    &     554  &    5   &  1    \\
wash\_6\_1\_1\_3      &     6962  &   6  &  13   &     1088    &     193  &    6   &  1    \\
wash\_6\_1\_1\_4      &     6622  &   6  &  13   &     1024    &     172  &    6   &  1    \\
wash\_6\_2\_1\_4      &    27412  &   6  &  13   &    10432    &     419  &    6   &  1    \\
\hline  
\end{tabular}
%\end{center}
\end{minipage}
\begin{minipage}{.3\textwidth}
	%\begin{center}
	\includegraphics[width=5cm, height=4cm]{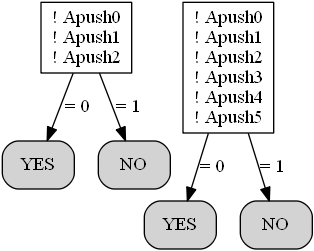} %scale=0.5
	%\end{center}
\end{minipage}
\caption{Washing cycles -- reachability}
\label{fig:washreach}
\end{figure}
\setlength{\abovecaptionskip}{0pt}

The table in Fig.~\ref{fig:washreach} summarizes the results for the cases where the controller cannot be synthesized and
we synthesize a counterexample winning reachability strategy of the environment. The benchmark parameters specify the total number of tanks,
the fill delay, the empty delay, and the number of tanks sharing a pipe, respectively. In all of these cases, the size of the
decision tree is substantially smaller compared to its BDD counterpart. The decision trees also provide some structural
insight that may easily be used in debugging. Namely, the trees have a simple repeating structure where the number of repetitions
depends just on the number of tanks. This is even easier to see once the chain heuristic of Section~\ref{subs:heuristics} is used.
Fig.~\ref{fig:washreach} shows the tree solution for the case of three and six tanks, respectively.
The structural phenomenon is not apparent from the BDDs at all.

\subsection{Random LTL}\label{subs:experrandomltl}

In reactive synthesis, the objectives are often specified as LTL (linear-time temporal logic) formulae over input/output letters.
In our experiments, we use formulae randomly generated using SPOT~\cite{spot}
\footnote{First, we run randltl from the Spot tool-set \texttt{randltl -n10000 5 --tree-size=20..25 –seed=0 --simplify=3 -p --ltl-priorities ’ap=3,
false=1,true=1,not=1,F=1,G=1,X=1,equiv=1,implies=1,xor=0,R=0,U=1,
W=0,M=0,and=1,or=1’ | ltlfilt --unabbreviate="eiMRWˆ"}
to obtain the formulae. Then we run Rabinizer to obtain the respective automata and we retain those with at least 100 states.}.
LTL formulae can be translated into deterministic parity automata; for this translation we use the tool Rabinizer~\cite{rabinizer}. 
Finally, given a parity automaton, we consider various partitions of the atomic propositions into input/output letters, which
gives rise to graph games with parity objectives. See Appendix~\ref{app:parity} %\cite[Appendix~F]{ThisFullVersion}
for more details on the translation. We retain all formulae that result in games with at most three priorities.

Consequently, we use two ways of encoding states of the graph games as binary vectors. First, {\em naive encoding}, allowed by the
fact that the output of tools such as \cite{spot,rabinizer} in HOA format \cite{DBLP:conf/cav/BabiakBDKKM0S15} always assigns an
id to each state. As this id is an integer, we may use its binary encoding.
Second, we use a more sophisticated {\em Rabinizer encoding} obtained by using internal structure of states produced by
Rabinizer~\cite{rabinizer}. Here the states are of the form ``formula, set of formulae, permutation, priority''.
We propose a very simple, yet efficient procedure of encoding the state structure information into bitvectors.
Although the resulting bitvectors are longer than in the naive encoding, some structural information of the game is preserved,
which can be utilized by decision trees to provide a more succinct representation.
BDDs perform a lot better on the naive encoding than on the Rabinizer encoding, since they are unable to exploit the preserved
state information. As a result, we consider the naive encoding with BDDs and both, the naive
and the Rabinizer encodings, with decision trees.

We consider 976 examples where the goal of the player, whose strategy is being represented,
is that the least priority occurring an infinite number of times is odd. 

\setlength{\abovecaptionskip}{-10pt}

\begin{figure}
	\begin{minipage}{.5\textwidth}
		\begin{center}
        \includegraphics[scale=0.45]{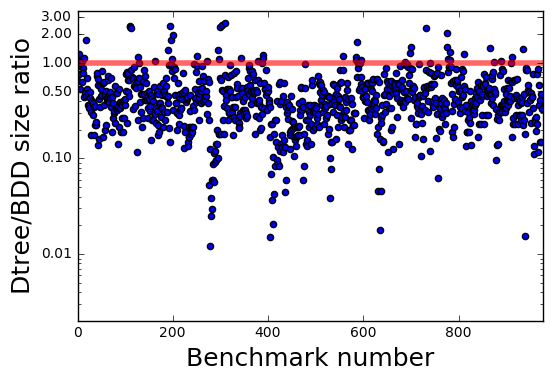}
		\end{center}
		\caption{BDDs vs DTrees}
		\label{fig:random1}
	\end{minipage}
	\begin{minipage}{.5\textwidth}
		\begin{center}
        \includegraphics[scale=0.45]{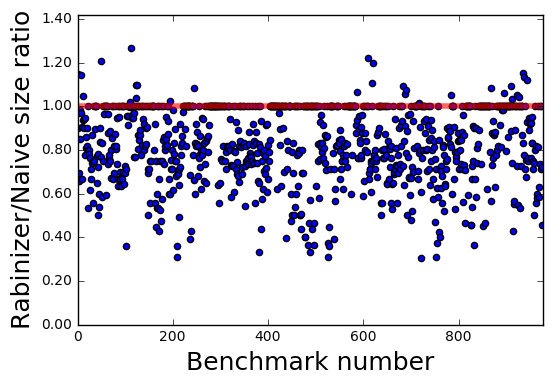}
		\end{center}
		\caption{DTrees improvement with Rabinizer enc.}
		\label{fig:random2}
	\end{minipage}
\end{figure}

\setlength{\abovecaptionskip}{4pt}

Fig.~\ref{fig:random1} plots the size ratios when we compare BDDs and decision trees (note that the $y$-axis scales logarithmically).
For each case, we consider 1000 random variable orderings and choose the BDD that is minimal in size, and after that we construct
a decision tree (without the chain heuristic of Section~\ref{subs:heuristics}). For BDDs, we also consider all the ordering algorithms mentioned
in the previous set of experiments, however, they provide no improvement compared to the random orderings.

In 925 out of 976 cases, the resulting decision tree is smaller than the corresponding BDD (in 3 cases they are of a same size and
in 48 cases the BDD is smaller). The arithmetic average ratio of decision tree size and BDD size is $\tilda46\%$, the geometric
average is $\tilda38\%$, and the harmonic average is $\tilda28\%$.

Fig.~\ref{fig:random2} demonstrates how decision tree representation improves once the features of the
game-structural information can be utilized. Each dot corresponds to a ratio of the decision tree size once the Rabinizer encoding is used,
and once the naive encoding is used. In 638 cases the Rabinizer encoding is superior, in 309 cases there is no
difference, and in 29 cases the naive encoding is superior. All three types of the average ratio are around $80\%$.
In Appendix~\ref{app:randltl} %\cite[Appendix~E]{ThisFullVersion}
we present the further improvement of decision trees once we use the chain heuristic of Section~\ref{subs:heuristics}.

\section{Conclusion}\label{sec:concl}
In this work we propose decision trees for strategy representation in graph 
games.
While decision trees have been used in probabilistic settings where errors are
allowed and overfitting of data is avoided, for graph games, strategies
must be entirely represented without errors. 
Hence optimization techniques for existing decision-tree solvers do not apply,
and we develop new techniques and present experimental results to demonstrate
the effectiveness of our approach.
Moreover, decision trees have several other advantages:
First, in decision trees the nodes represent predicates, and in richer domains, 
e.g., where variables represent integers, the internal nodes of the tree can 
represent predicates in the corresponding domain, e.g., comparison between the 
integer variables and a constant. 
Hence richer domains can be directly represented as decision trees without 
conversion to bitvectors as required by BDDs. 
However, we restricted ourselves to the boolean domain to show that even in 
such domains that BDDs are designed for the decision trees improve over BDDs.
Second, as illustrated in our examples, decision trees can often provide similar and
scalable solution when some parameters vary. 
This is quite attractive in reactive synthesis where certain parameters vary, 
however they affect the strategy in a minimal way. 
Our examples show decision trees exploit this much better than BDDs, and can 
be useful in parametrized synthesis.
Our work opens up many interesting directions of future work. 
For instance, richer versions of decision trees that are still well-readable could be used instead,
such as decision trees with more complex expressions in leaves \cite{DBLP:conf/tacas/NeiderSM16}.
The applications of decision trees in other applications related to 
reactive synthesis is an interesting direction of future work.
Another interesting direction is the application of the look-ahead technique 
in the probabilistic settings.

%\clearpage

\subsubsection*{Data Availability Statement and Acknowledgments.}This work has been partially supported by
the Czech Science Foundation, Grant No.~P202/12/G061, %T.B.
Vienna Science and Technology Fund (WWTF) Project ICT15-003, %K.C.
Austrian Science Fund (FWF) NFN Grant No.~S11407-N23 (RiSE/SHiNE), %K.C.
ERC Starting grant (279307: Graph Games), %K.C.
DFG Grant No KR 4890/2-1 (SUV: Statistical Unbounded Verification), %J.K.
TUM IGSSE Grant 10.06 (PARSEC) %J.K.
and EU Horizon 2020 research and innovation programme under the Marie Sk\l{}odowska-Curie Grant No.~665385. %V.T.
We thank Fabio Somenzi for detailed information about variable reordering in BDDs.
The source code and binary files used to obtain the results presented in this paper
are available in the figshare repository: https://doi.org/10.6084/m9.figshare.5923915~\cite{Artifact}.

{
\bibliographystyle{abbrv}
\bibliography{diss,related-work}
}

\vspace{6mm}
\section*{Appendix}
\appendix

\setlength\floatsep{2pt}
\setlength\textfloatsep{2pt}
\setlength\intextsep{2pt}
\setlength\abovecaptionskip{0pt}
\setlength\belowcaptionskip{0pt}

\section{Artifact Description}

We provide instructions to replicate the experimental results presented in this paper, using our artifact that
is openly available at~\cite{Artifact}. All the results can be obtained
with the heap size limited to 8 GB.

\vspace{3mm}
\smallskip\noindent{\bf Results for Scheduling of Washing Cycles (Section~\ref{subs:experaiger}).}
Running this batch takes roughly 30 hours and generates 7.1GB of training data. Note that
we did not include around 30 most resource-demanding benchmarks of this batch in the artifact.
(i) in folder art, execute ./run.sh wTOTAL,
(ii) observe the results at art/results/reports/reprWash\{2,3,4,reach\}.txt,
(iii) in folder art/results, execute python plotsWash.py and observe the plots that correspond to Figure~\ref{fig:washsafe}.
Alternatively, to run a subset of this batch that takes only 30 minutes to run and generates only 265MB of training data,
in (i) execute ./run.sh wPART. To additionaly generate dot representation of DTs/BDDs, in (i) execute
either ./run.sh wTOTALdot or ./run.sh wPARTdot.

\vspace{3mm}
\smallskip\noindent{\bf Results for Scheduling of Washing Cycles BDD reordering (Appendix~\ref{app:wash}).} %(\cite[Appendix~C]{ThisFullVersion})
Running this batch takes roughly 30 minutes.
(i) make sure you have the training data obtained by running the batch above,
(ii) in folder art/results, execute ./runBDDreorder.sh,
(iii) observe the results at art/results/reports/BDDreorder.txt.

\vspace{3mm}
\smallskip\noindent{\bf Results for Random LTL (Section~\ref{subs:experrandomltl}).}
Running this batch takes roughly 2 hours and generates 84MB of training data.
(i) in folder art, execute ./run.sh rTOTAL,
(ii) observe the results at art/results/reports/reprRandomLTL\{naive,encoded\}.txt,
(iii) in folder art/results, execute python plotsRandomLTL.py and observe the plots
that correspond to Figure~\ref{fig:random1} and Figure~\ref{fig:random2}.

\vspace{3mm}
\smallskip\noindent{\bf Results for Bit Shifter (Appendix~\ref{app:bitshifter}).} %(\cite[Appendix~D]{ThisFullVersion})
Running this experiment batch takes roughly 5 minutes. Note that we did not include
two benchmarks in the artifact since they take considerable execution time.
(i) in folder art, execute ./run.sh aTOTAL,
(ii) observe the results at art/results/reports/reprAiger.txt.

\clearpage

\section{Correctness of Algorithm $k$-look-ahead ID3}\label{app:algcorr}

\begin{theorem}
Let $G$ be an I/O game with binary variables, and let $\sigma \colon S^R_*\rightarrow A_*$ be a memoryless
strategy that defines a training set $\mathit{Train}$. Algorithm~\ref{alg:learn} with input $\mathit{Train}$ outputs
a decision tree $\mathcal{T}$ such that $\mathcal{L}(\mathcal{T})\cap\mathit{Train}=\mathit{Good}$,
which means that for all $s\in S^R_*$ we have that $\tuple{s,a}\in \mathcal{L}(\mathcal{T})$ iff $\sigma(s)=a$.
Thus $\mathcal{T}$ represents the strategy $\sigma$.

\end{theorem}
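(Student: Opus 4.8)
The plan is to show that Algorithm~\ref{alg:learn} (a) terminates, and (b) upon termination produces a decision tree $\mathcal{T}$ with no mixed leaf, and then to observe that a leaf-wise correct labeling of a mixed-leaf-free tree over the training set is exactly a representation of $\sigma$. The claim then follows immediately from the definitions of $\mathit{Good}$, $\mathit{Bad}$, and $\mathcal{L}(\mathcal{T})$ given in Section~\ref{sec:learn}.

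First I would establish the \emph{loop invariant} that the leaves of $\mathcal{T}$ always form a partition of $\mathit{Train}$ (indeed of $\{0,1\}^d$): this holds initially since the single leaf is $\mathit{Train}$, and every iteration replaces a leaf $\ell$ by $\ell_0 = \{\vec x \in \ell \mid x_{\mathit{bit}} = 0\}$ and $\ell_1 = \ell \setminus \ell_0$, which partition $\ell$. Second, for termination, I would argue that each split strictly increases a monotone quantity bounded above — e.g.\ the number of leaves, which cannot exceed $2^d$ (a split on $\mathit{bit}$ of a leaf $\ell$ is only performed when $\ell$ is mixed, hence $|\ell| \ge 2$; and since the leaves always partition a subset of $\{0,1\}^d$, there can be at most $2^d$ of them). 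One must also check that the \textbf{else} branch on Line~\ref{alg:learn:heur} is always well-defined, i.e.\ that whenever no $i$-look-ahead ($i \le k$) yields positive information gain, the $\argmax$ is still over a nonempty, well-defined set; here I would note that $H(\ell) \ge 0$ with $H(\ell) = 0$ iff $\ell$ is pure, so a mixed leaf has $H(\ell) > 0$ and the statistical quantity is a well-defined real for every $i$ (the denominators $|\ell[i=0]|, |\ell[i=1]|$ can be zero, which must be handled by a convention, e.g.\ treating a $0/0$ summand as $0$, or restricting to bits that actually split $\ell$). The key subtlety for termination is to make sure that \emph{some} split always makes progress even in the \textbf{else} branch — but progress here only means "the leaf gets split", and since a mixed leaf has at least two distinct samples $\langle s, \sigma(s)\rangle$ and $\langle s', a'\rangle$ (or two that differ in at least one coordinate), there exists a bit separating them nontrivially, or else all samples in $\ell$ are identical as vectors, which contradicts mixedness since $\mathit{Good}$ and $\mathit{Bad}$ are disjoint. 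I would spell this out: mixedness of $\ell$ forces $|\ell| \ge 2$ with not all elements equal, so at least one bit $i$ has both $\ell[i=0]$ and $\ell[i=1]$ nonempty, guaranteeing that the chosen $\mathit{bit}$ (whatever rule selects it, provided the selection is restricted to such separating bits, or the heuristic is shown to prefer them) produces two strictly smaller leaves. Iterating, every leaf eventually becomes a singleton, which is pure, so the \textbf{while} loop exits.

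Third, once the loop exits there are no mixed leaves, so for every leaf $\ell$ either $\ell \subseteq \mathit{Good}$ or $\ell \subseteq \mathit{Bad}$ or $\ell \cap \mathit{Train} = \emptyset$; Line~\ref{alg:learn:theta} sets $\theta(\ell) = \mathit{maxclass}(\ell)$, which on a pure leaf equals $\good$ exactly when $\ell \subseteq \mathit{Good}$ (using that $\mathit{Good} \cap \mathit{Bad} = \emptyset$). Combining with the partition invariant: for $\vec x \in \mathit{Train}$, the unique root-to-leaf path determined by $\vec x$ (as in the definition of $\mathcal{L}(\mathcal{T})$) ends in the unique leaf $\ell$ containing $\vec x$, and $\vec x \in \mathcal{L}(\mathcal{T})$ iff $\theta(\ell) = \good$ iff $\vec x \in \mathit{Good}$. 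Hence $\mathcal{L}(\mathcal{T}) \cap \mathit{Train} = \mathit{Good}$. Finally, instantiating with $\vec x = \langle s, a\rangle$ for $s \in S^R_*$: by the definitions of $\mathit{Good}$ and $\mathit{Bad}$, $\langle s, a\rangle \in \mathit{Good}$ iff $a = \sigma(s)$, so $\langle s, a\rangle \in \mathcal{L}(\mathcal{T})$ iff $\sigma(s) = a$, i.e.\ $\mathcal{T}$ represents $\sigma$.

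I expect the main obstacle to be the \textbf{termination argument for the \textbf{else} branch}: one must confirm that the statistical split-decision rule on Line~\ref{alg:learn:heur}, or whatever tie-break is in force, never selects a "degenerate" bit that leaves $\ell$ unchanged (all samples on one side), since such a choice would make the tree grow without the leaf being refined and break the monotone-progress argument. The clean fix is to observe that a mixed leaf is never a single repeated vector, hence always admits a genuinely separating bit, and to note that the heuristic's $\argmax$ — being a sum of a $\mathit{Bad}$-fraction and a $\mathit{Good}$-fraction — is strictly larger for a separating bit than the value forced by a degenerate one under the natural convention, so the algorithm in fact prefers separating bits; alternatively one simply restricts all the $\argmax$/$\argmin$ operators in Algorithm~\ref{alg:learn} to bits that actually partition $\ell$ into two nonempty parts. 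Everything else is bookkeeping with the partition invariant and the disjointness $\mathit{Good} \cap \mathit{Bad} = \emptyset$.
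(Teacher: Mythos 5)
Your proposal is correct, and its partial-correctness half (no mixed leaves at exit, pure leaves get $\theta=\mathit{maxclass}$, hence $\mathcal{L}(\mathcal{T})\cap\mathit{Train}=\mathit{Good}$) is essentially identical to the paper's. Where you genuinely diverge is termination. The paper proves the invariant that no feature index is used twice on any root-to-leaf path: for a mixed leaf all samples coincide on the already-used features $X$, every $x\in X$ has zero $k$-look-ahead information gain, and in the fallback the $\tfrac{0}{0}$ terms for $x\in X$ are defined to be $0$ while some unused separating feature $y$ scores positively; this bounds every path by $d$ and hence bounds the tree. You instead argue that every selected $\mathit{bit}$ genuinely bisects the mixed leaf into two nonempty parts, so the leaves remain disjoint nonempty subsets of $\mathit{Train}$ and their number (which grows by one per iteration) is bounded. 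Both arguments rest on the same key lemma -- the algorithm never picks a degenerate bit -- established the same way: degenerate bits have zero ($k$-look-ahead) information gain, so they cannot be chosen in the \textbf{if}/\textbf{elsif} branches, and under the $0/0\mapsto 0$ convention they lose in the \textbf{else} branch (your unproved assertion here is in fact easy to verify: for a separating bit the two candidate sums in the heuristic add up to exactly $2$, so the maximum is at least $1$, whereas a degenerate bit on a mixed leaf scores strictly below $1$). The paper's route buys the sharper structural fact that tree depth is at most $d$; yours is more elementary, needs only the weaker property "the bit separates $\ell$" rather than "the bit is unused on the path", and gives an iteration bound of $|\mathit{Train}|-1$. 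You also correctly flag the one real pitfall (degenerate splits in the fallback), which is exactly the point the paper's proof handles with its explicit $0/0$ convention.
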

\begin{proof}
Recall that a strategy $\sigma \colon S^R_*\rightarrow A_*$ defines:
\begin{itemize}
\item $\mathit{Good} = \{\tuple{s,\sigma(s)}\in S^R_*\times A_*\}$
\item $\mathit{Bad} = \{\tuple{s,a} \in S^R_*\times A_* \mid a\not = \sigma(s)\}$
\item $\mathit{Train} = \mathit{Good} \uplus \mathit{Bad}$ ($\uplus$ denotes a disjoint union)
\end{itemize}
Since we consider I/O games with binary variables, states and actions are labeled by bitvectors,
so $\mathit{Train} \subseteq \{0,1\}^{d}$, where $d$ is the number of features for states plus the
number of features for actions.
Further recall that given a leaf $\ell$, we define $\mathit{maxclass}(\ell)$ as $\good$ if
$|\ell\cap\mathit{Good}|\geq |\ell\cap\mathit{Bad}|$, and $\bad$ otherwise.
Also, a leaf is mixed if it has a non-empty intersection with both $\mathit{Good}$ and $\mathit{Bad}$.
Finally recall that given a decision tree $\mathcal{T}=(T,\rho,\theta)$,
$\rho$ assigns to every inner node a number from $\{1,\ldots,d\}$,
and $\theta$ assigns to every leaf a value $\good$ or $\bad$.

\paragraph{Partial correctness.}
Consider the algorithm with input $\mathit{Train}$, and let $\mathcal{T}$ be the output decision tree. Consider
arbitrary $\tuple{s,a} \in S^R_*\times A_*$, note that it belongs to $\mathit{Train}$. Consider the leaf $\ell$ corresponding
to $\tuple{s,a}$ in $\mathcal{T}$, i.e., $\tuple{s,a} \in \ell$. Decision tree $\mathcal{T}$ has no mixed leaves, since
otherwise the main while-loop (line~\ref{alg:learn:mixedleaf}) and the algorithm would not have terminated.
Therefore $\ell$ is not mixed and thus we have that either (i) $\ell \cap \mathit{Train} \subseteq \mathit{Good}$,
implying $\mathit{maxclass}(\ell) = \good$, or (ii) $\ell \cap \mathit{Train} \subseteq \mathit{Bad}$,
implying $\mathit{maxclass}(\ell) = \bad$. Additionaly, $\theta(\ell) = \mathit{maxclass}(\ell)$, which was set
at line~\ref{alg:learn:theta} during the iteration of the main while-loop that processed the parent of $\ell$.
Since $\tuple{s,a} \in \ell \cap \mathit{Train}$, we obtain $\theta(\ell)=\good$ iff $\tuple{s,a}\in \mathit{Good}$.
Finally, since  $\tuple{s,a}\in \mathcal{L}(\mathcal{T})$ iff $\theta(\ell)=\good$ and $\tuple{s,a}\in \mathit{Good}$
iff $\sigma(s)=a$, we obtain $\tuple{s,a}\in \mathcal{L}(\mathcal{T})$ iff $\sigma(s)=a$.

\paragraph{Total correctness.}
We maintain an invariant such that the length of an arbitrary path in $\mathcal{T}$ is at most $d$, i.e., the number
of features in $\mathit{Train}$. We prove this by showing that for every feature $x\in\{1,\ldots,d\}$, for every path
in $\mathcal{T}$, at most one inner node $\bar{n}$ of the path has $\rho(\bar{n})=x$.

Consider a path with a mixed leaf $\ell$, let $X \subset \{1,..,d\}$ be the set of features appearing in this path.
All elements of $\ell \cap \mathit{Train}$ coincide in the values of the features of $X$. Additionally, from the
definition of a mixed leaf it follows that $X$ is indeed a strict subset of $\{1,..,d\}$ and that there exists an
element of $\ell \cap \mathit{Good}$ and an element of $\ell \cap \mathit{Bad}$, let $y \notin X$ be an
arbitrary feature where these two elements differ. Consider the smallest $k$ such that the
maximum $k$-look-ahead information gain is positive (such $k$ exists and its value is bounded by $d-|X|$).
For every $x \in X$, its $k$-look-ahead information gain is zero, since (i) its information gain is zero, and
(ii) in case $k>1$, its $k$-look-ahead information gain is bounded by the maximum $(k\!-\!1)$-look-ahead
information gain, which is zero. Therefore the feature maximizing $k$-look-ahead information gain
does not belong to $X$. When the heuristic at line~\ref{alg:learn:heur} is computed, the value of the
computation formula for feature $y$ is positive, whereas for every $x \in X$, we obtain undefined
terms $\frac{0}{0}$ in the computation formula, so we explicitly define the value as 0. Therefore the
feature maximizing the value of the formula does not belong to $X$. Finally, at line~\ref{alg:learn:rho}
of the algorithm, we set $\rho(\ell) = \mathit{bit}$, and by the arguments presented above, $\mathit{bit} \notin X$.

Since every iteration of the while-loop adds two vertices to the decision tree, by the above invariant we have
that the algorithm terminates. This together with partial correctness gives us total correctness.\qed

\end{proof}

\section{Details of Section~\ref{subs:experaiger}: Scheduling of Washing Cycles}\label{app:wash}

\smallskip\noindent{\bf Execution time.}
The average time spent on constructing a decision tree for a given benchmark is around $32$ seconds.
For BDDs, the construction algorithm uses the optimized \textsc{CuDD}~\cite{Somenzi15} library,
and constructs faster. Therefore, we consider $1000$ randomly chosen variable orderings and in the
end retain the smallest BDD. This way, we provide a lot higher time budget for the BDD construction,
the average time spent on one example is around $327$ seconds.
Finally, the average time spent on one example for decision trees with the chain heuristic
(see Section~\ref{subs:heuristics}) is around $23$ seconds. Note that this is shorter than when
the heuristic is turned off. This shows the heuristic incurs minimal computational overhead,
and on the other hand saves resources when performing essentially multiple splits at once.

\vspace{2mm}
\smallskip\noindent{\bf Reordering algorithms.}
We compare decision trees (obtained without the chain heuristic of Section~\ref{subs:heuristics}), and
BDDs obtained as follows. For each example, we consider four BDDs obtained using the following
reordering algorithms: Sift~\cite{Rudell93}, Window4~\cite{Fujita91}, simulated-annealing-based
algorithm~\cite{Bollig95}, and a genetic algorithm~\cite{Drechsler95}.  Then, we retain the
smallest BDD.

In 332 out of 394 cases, none of the algorithms provide any improvement compared to the 
default variable ordering. This suggests the default natural ordering is already quite solid for the
BDD representation. Fig.~\ref{fig:washsafereorder} plots the results of the comparison, the red
dots correspond to the cases where reordering algorithms provide improvement.
The decision tree is smaller in 386 cases, and the BDD is smaller in 8 cases.
The arithmetic average ratio of decision tree size and BDD size is $\tilda28\%$, the geometric
average is $\tilda25\%$, and the harmonic average is $\tilda23\%$.

\begin{figure}
	\begin{center}
		\includegraphics[scale=0.52]{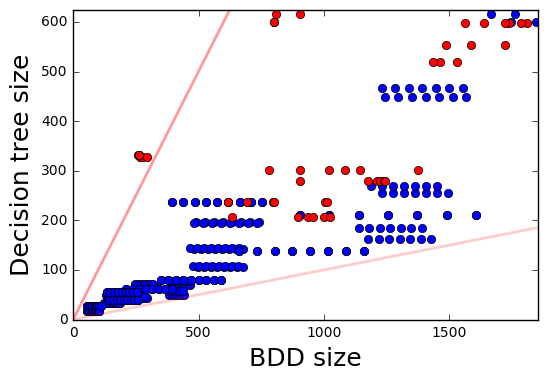}
	\end{center}
	\vspace{-4mm}
	\caption{Washing cycles -- safety; BDDs with reordering algorithms}
	\label{fig:washsafereorder}
\end{figure}

\smallskip\noindent{\bf Parametric solutions -- comparison with BDDs.}
In Table~\ref{fig:washsafesnippet} we provide a snippet of the table containing detailed information about the
results of the experiments for safety. The benchmark parameters specify the total number of tanks,
the fill delay, the empty delay, and the number of tanks sharing a pipe, respectively.
The snippet shows how BDDs differ in size for the cases where decision
trees provide a solution that differs minimally and is easily generalizable for all the cases.

\vspace{2mm}
\smallskip\noindent{\bf Parametric solutions -- more examples.}
For two tanks and empty delay of one, in Section~\ref{subs:experaiger} we provide one illustration
of the generalizable solution, for fill delay of $7$. Fig.~\ref{fig:washsafeuniv4} provides an illustration for fill delay
$4$, in order to show how some labels change when the parameter value is changes (note that the
structure of the decision tree remains the same). Finally, Fig.~\ref{fig:washsafeunivp} presents the parametric solution
for all the values of the fill delay, which could be be easily obtained by a syntactic analysis of the
difference of any two instance solutions.

\begin{table}[t]
	\begin{center}
	%\vspace{-4mm}	
		\begin{tabular}{ l | l | l | l | l || r | r | r}
			\hline
			Name & $|S|$ & $|I|$ & $|O|$ & $|\mathit{Train}|$ & $|BDD|$ & $|DT|$ & $|DT\!+\!|$ \\
			\hline
wash\_2\_1\_1\_1     &       35  &   2  &   5     &        800       &      50   &   18    &     9    \\
wash\_2\_2\_1\_1     &       53  &   2  &   5     &        800       &      54   &   18    &     9    \\
wash\_2\_3\_1\_1     &       75  &   2  &   5     &        800       &      57   &   18    &     9    \\
wash\_2\_4\_1\_1     &      101  &   2  &   5     &        800       &      64   &   18    &     9    \\
wash\_2\_5\_1\_1     &      131  &   2  &   5     &        800       &      74   &   18    &     9    \\
wash\_2\_6\_1\_1     &      165  &   2  &   5     &        800       &      77   &   18    &     9    \\
wash\_2\_7\_1\_1     &      203  &   2  &   5     &        800       &      84   &   18    &     9    \\
wash\_2\_8\_1\_1     &      245  &   2  &   5     &        800       &      85   &   18    &     9    \\
wash\_2\_9\_1\_1     &      291  &   2  &   5     &        800       &      75   &   18    &     9    \\ \hline
wash\_2\_2\_2\_1     &      118  &   2  &   5     &       2592       &      94   &   33    &    22    \\
wash\_2\_3\_2\_1     &      150  &   2  &   5     &       2592       &     121   &   33    &    22    \\
wash\_2\_4\_2\_1     &      186  &   2  &   5     &       2592       &     113   &   33    &    22    \\
wash\_2\_5\_2\_1     &      226  &   2  &   5     &       2592       &     154   &   33    &    22    \\
wash\_2\_6\_2\_1     &      270  &   2  &   5     &       2592       &     138   &   33    &    22    \\
wash\_2\_7\_2\_1     &      318  &   2  &   5     &       2592       &     165   &   33    &    22    \\
wash\_2\_8\_2\_1     &      370  &   2  &   5     &       2592       &     126   &   33    &    22    \\
wash\_2\_9\_2\_1     &      426  &   2  &   5     &       2592       &     185   &   33    &    22    \\ \hline
wash\_3\_1\_1\_1     &      153  &   3  &   7     &      16000       &     139   &   39    &    21    \\
wash\_3\_2\_1\_1     &      281  &   3  &   7     &      16000       &     142   &   39    &    21    \\
wash\_3\_3\_1\_1     &      469  &   3  &   7     &      16000       &     132   &   39    &    21    \\
wash\_3\_4\_1\_1     &      729  &   3  &   7     &      16000       &     181   &   39    &    21    \\
wash\_3\_5\_1\_1     &     1073  &   3  &   7     &      16000       &     185   &   39    &    21    \\
wash\_3\_6\_1\_1     &     1513  &   3  &   7     &      16000       &     215   &   39    &    21    \\
wash\_3\_7\_1\_1     &     2061  &   3  &   7     &      16000       &     217   &   39    &    21    \\
wash\_3\_8\_1\_1     &     2729  &   3  &   7     &      16000       &     217   &   39    &    21    \\
wash\_3\_9\_1\_1     &     3529  &   3  &   7     &      16000       &     253   &   39    &    21    \\

			\hline  
		\end{tabular}
	\end{center}
	%\vspace{-2mm}
	\caption{Washing cycles -- safety; snippet of the results}
	\label{fig:washsafesnippet}
\end{table}

We have observed multiple cases of the parametric solution phenomenon.
We present one more example for the case of three tanks and empty delay of one,
in Fig.~\ref{fig:washsafeuniv3tanks} (where the fill delay is set to $7$).

\section{Details of Section~\ref{subs:experaiger}: Bit Shifter}\label{app:bitshifter}

The specification for a bit shifter circuit is one of the toy example benchmarks for SYNTCOMP. The benchmark set
is parametrized by the length of the input bit array.

\vspace{2mm}
\begin{table}
	\begin{center}
		\begin{tabular}{ l | l | l | l | l || r | r | r}
			\hline
			Name & $|S|$ & $|I|$ & $|O|$ & $|\mathit{Train}|$ & $|BDD|$ & $|DT|$ & $|DT\!+\!|$ \\
			\hline
			bs16n & 305 & 4 & 1 & 64 		& 39 & 11 & 3\\
			bs32n & 1121 & 5 & 1 & 128 		& 72 & 13 & 3\\
			bs64n & 4289 & 6 & 1 & 256 		& 137 & 15 & 3\\
			bs128n & 16769 & 7 & 1 & 512 	& 266 & 17 & 3\\
			bs256n & 66305 & 8 & 1 & 1024 	& 523 & 19 & 3\\
			bs512n & 263681 & 9 & 1 & 2048 	& 1036 & 21 & 3\\
			\hline  
		\end{tabular}
	\end{center}
	\caption{Bit shifter}
	\label{fig:bsres}
\end{table}

Fig.~\ref{fig:bsres} summarizes the results. The decision trees are smaller in each case and the difference increases
with the increasing parameter value for the benchmark.

%%%%%

\begin{figure}
	\vspace{-4mm}
	\begin{minipage}{.9\textwidth}
		\begin{center}
        		\includegraphics[width=12cm, height=10cm]{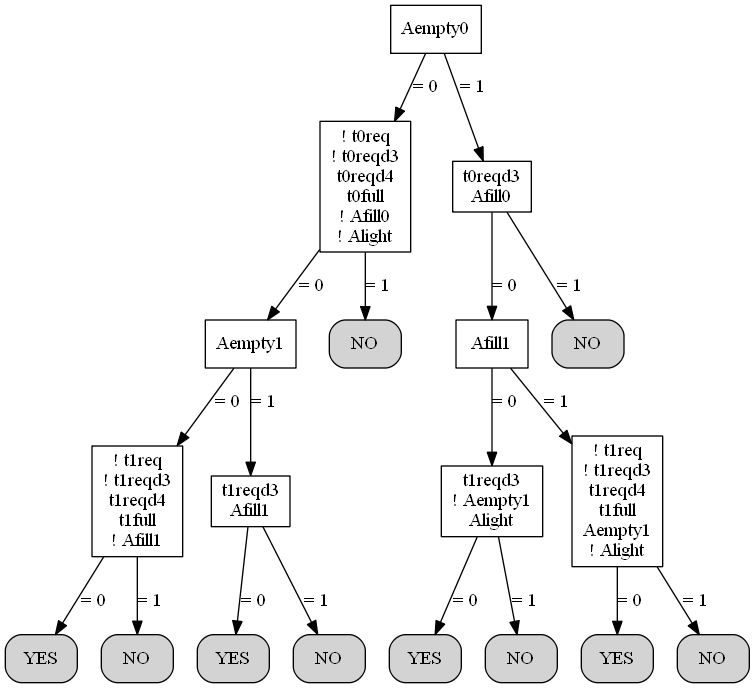}
		\end{center}
		\vspace{-4mm}
		\caption{A solution for two tanks and empty delay of one, illustration for fill delay of $4$.}
		\vspace{1mm}
		\label{fig:washsafeuniv4}
	\end{minipage}
	\begin{minipage}{.9\textwidth}
		\begin{center}
        		\includegraphics[width=12cm, height=10cm]{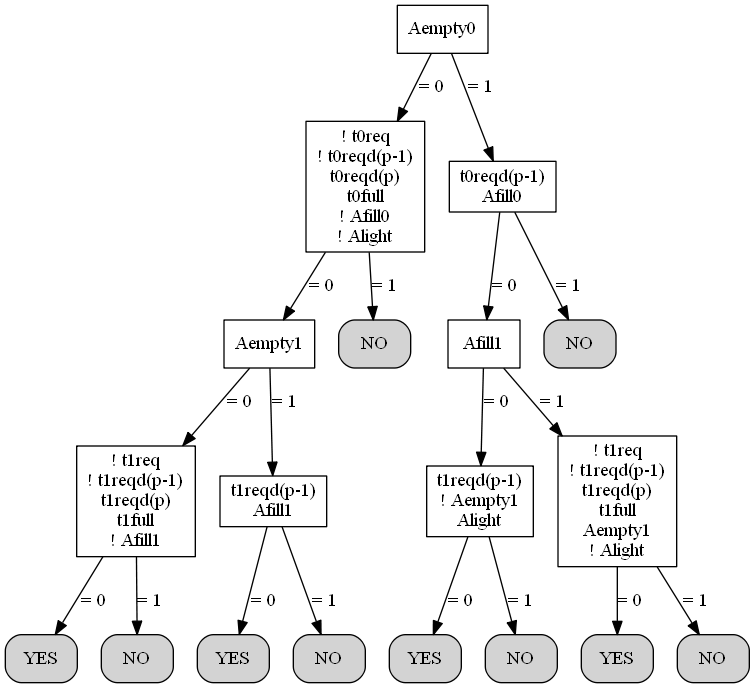}
		\end{center}
		\vspace{-4mm}
		\caption{A parametric solution for two tanks and empty delay of one, fill delay parametrized by $p$.}
		\label{fig:washsafeunivp}
	\end{minipage}
\end{figure}

\begin{figure}[t]
	\begin{center}
	\includegraphics[scale=0.25]{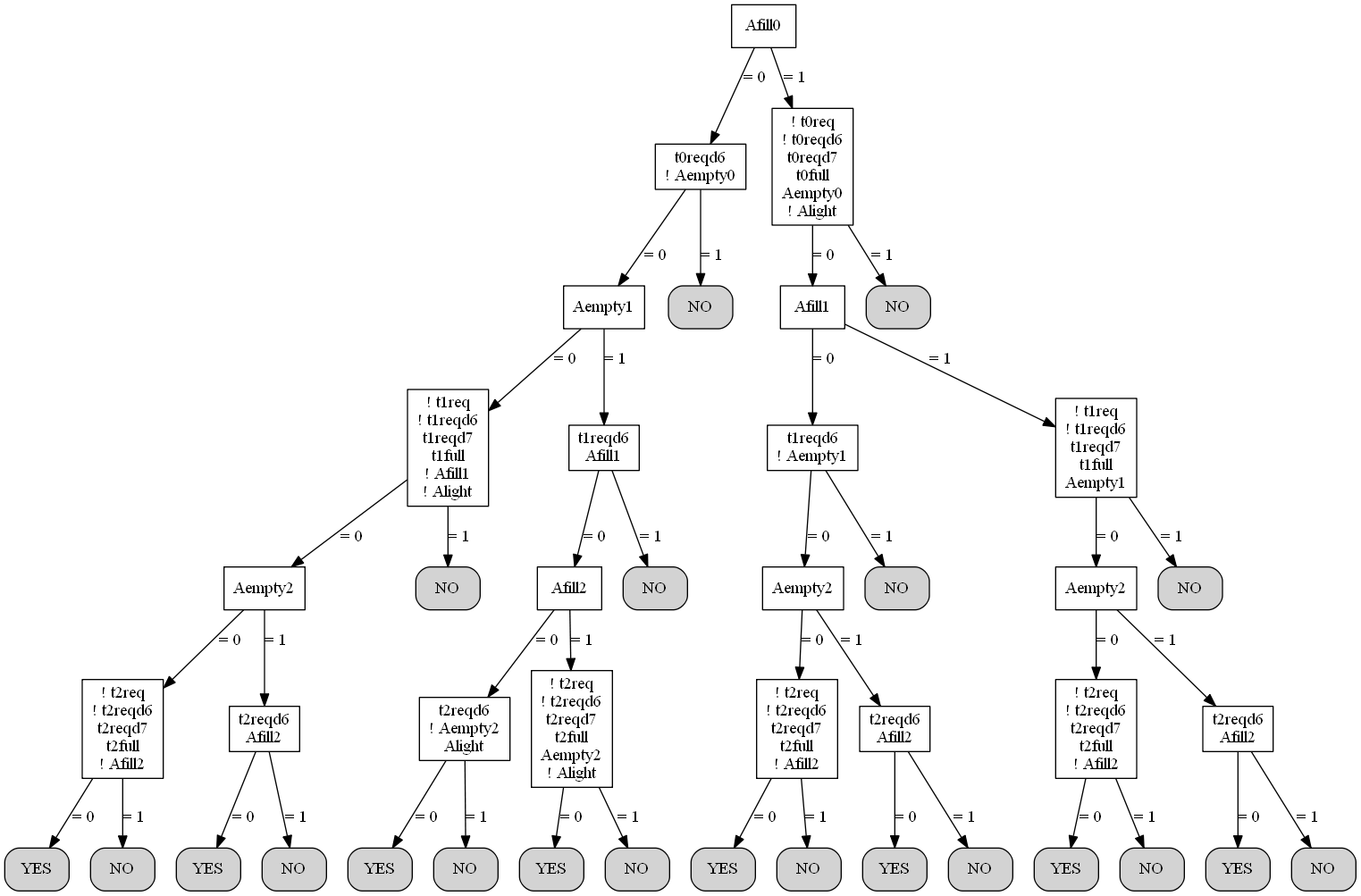}
	\end{center}
\vspace{-3mm}
\caption{A solution for three tanks and empty delay of one, illustration for fill delay of $7$.}
\label{fig:washsafeuniv3tanks}
\vspace{4mm}
\end{figure}

%%%%%

Moreover, unlike BDDs, the computed decision trees provide a scalable universal solution for the whole family of benchmarks.
Fig.~\ref{fig:bstree} shows the decision tree computed for the benchmark with the lowest parameter value and the highest
parameter value, respectively. The scalable universal solution is more apparent and easier to understand once we use the chain
heuristic described in Section~\ref{subs:heuristics} to construct the decision trees. Fig.~\ref{fig:bstreeplus} shows the two decision trees
once the chain heuristic is enabled.

\section{Details of Section~\ref{subs:experrandomltl}: Random LTL}\label{app:randltl}

Fig.~\ref{fig:ltlbasicchained} plots the ratios of decision tree sizes with and without the chain heuristic described in
Section~\ref{subs:heuristics}. In 21 cases the size of both trees is the same, in the remaining 955 cases
the decision tree becomes smaller after applying the heuristic. All three types of the average ratio are around $52\%$.

\begin{figure}
	\begin{center}
	\includegraphics[scale=0.44]{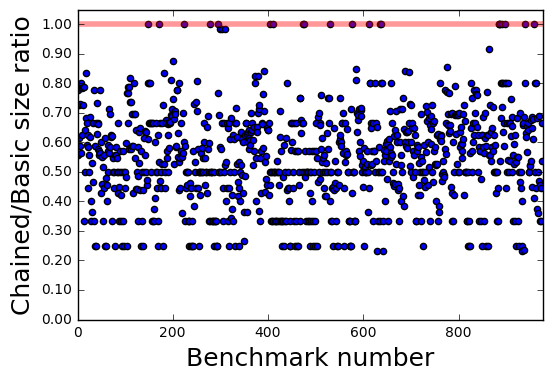}
	\end{center}
\vspace{-4mm}
\caption{Basic vs Chained decision trees}
\label{fig:ltlbasicchained}
\end{figure}

\begin{figure}
	\begin{minipage}{.9\textwidth}
	\begin{center}
		\includegraphics[scale=0.3]{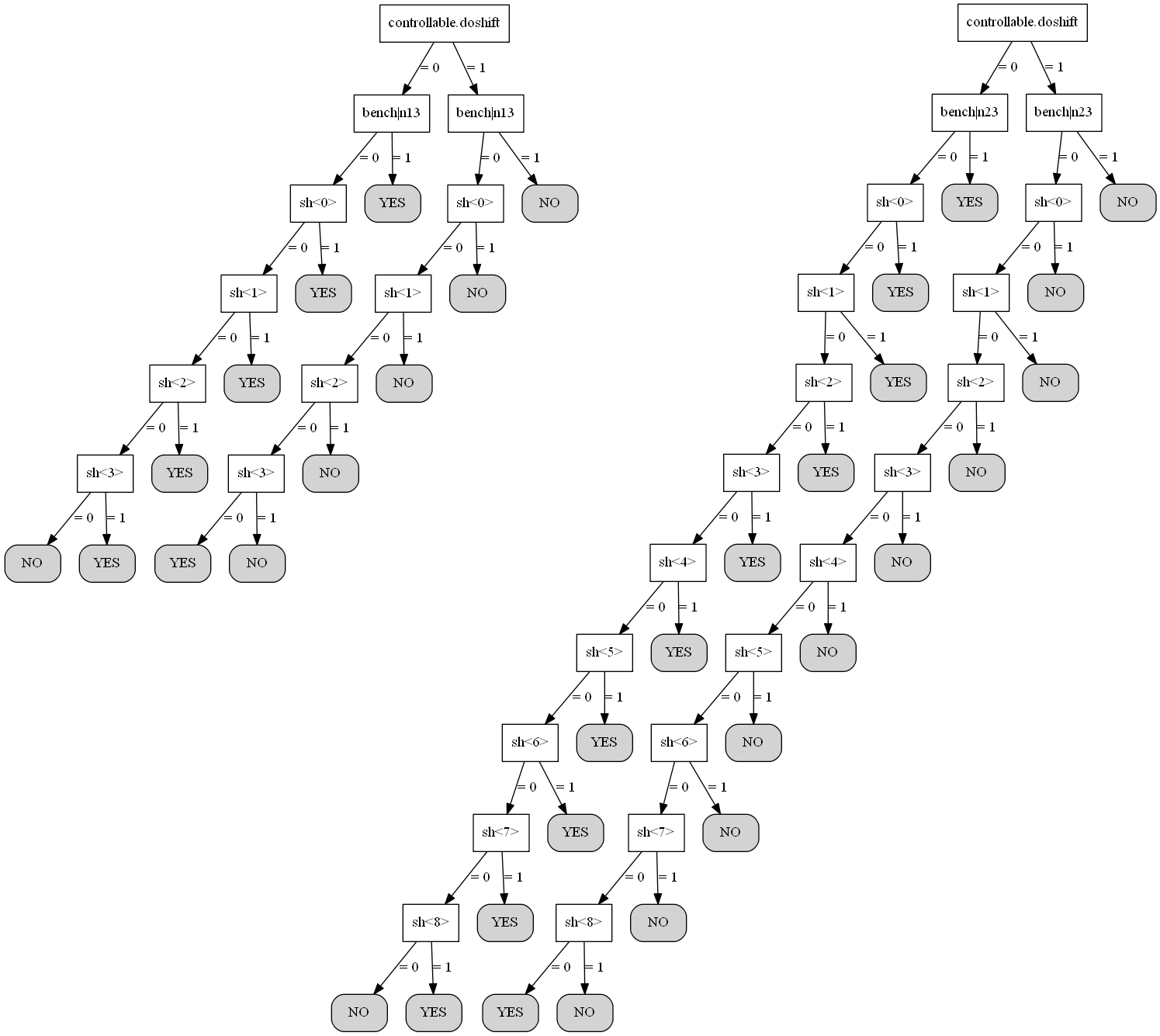}
	\end{center}
	\caption{bs16n decision tree and bs512n decision tree}
	\label{fig:bstree}
	\vspace{6mm}
	\end{minipage}
	\begin{minipage}{.9\textwidth}
	\begin{center}
		\includegraphics[scale=0.4]{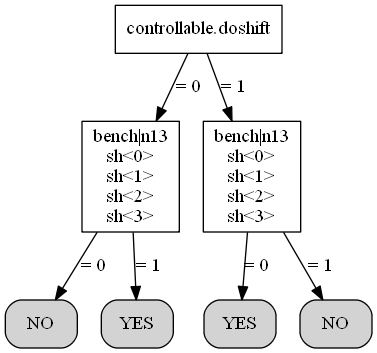}
		\includegraphics[scale=0.4]{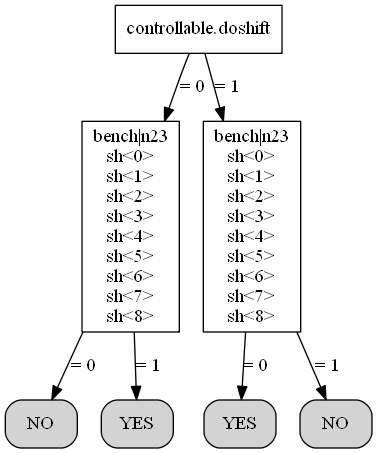}
	\end{center}
	\caption{bs16n decision tree and bs512n decision tree when constructed with the chain heuristic}
	\label{fig:bstreeplus}
	\end{minipage}
\end{figure}

\section{Details of Section~\ref{subs:experrandomltl}: From LTL to I/O Games}\label{app:parity}

\smallskip\noindent{\bf Objective transformation.}
LTL formulae can be translated into non-deterministic B\"uchi automata~\cite{VardiWolper86lics},
and then translated to deterministic parity automata~\cite{Safra88}. 
The synchronous product of the game graph and deterministic parity automata thus
gives rise to graph games with parity objectives.

While such translation is doubly exponential in the worst-case, practically this is rarely the case and there are
efficient tools \cite{spot,rabinizer} allowing to translate some reasonably sized formulae.
Moreover, the number of priorities can also be limited.
For instance, the GR(1) fragment can be translated to 
parity automata with three priorities. 

The first conclusion is that the resulting parity automaton corresponds to the arena of the game and each
state of the automaton can be encoded into binary, resulting in a sequence of state variables for the I/O
game with variables (see the main body of the text). The second conclusion is that we are interested in
positional strategies in these games since parity games allow for memoryless winning strategies, too.

It remains to show how to solve the parity games in our setting efficiently.

\vspace{4mm}
\smallskip\noindent{\bf Strategy construction in parity games.}
There are several algorithms to solve parity game and several solvers
available \cite{DBLP:conf/atva/FriedmannL09,DBLP:conf/atva/MeyerL16}.
Here we use the classical algorithm of Zielonka, tailored to parity 3, covering such
fragments as GR(1) in polynomial time. The algorithm is recursive.
Consider that~0 is the least priority in the game.
Let $\varphi$ and $\overline{\varphi}$ denote the parity objectives of 
player~1 and player~2, respectively. 
The algorithm repeats the following steps:
\begin{compactenum}
	\item The algorithm first computes the set $Y$ of states such that player~1 can 
	ensure reach the set of states with priority~0.
	
	\item Consider the subgame $G'$ without the set $Y$ of states (which has one less priority).
	The subgame $G'$ is solved recursively.
	Let $Z=W_2(G',\overline{\varphi})$ denote the winning region for player~2 in the subgame.
	
	\item The set of vertices in the original game such that player~2 can ensure to reach 
	$Z$ is removed as part of winning region for player~2, and then the algorithm 
	repeats the above steps on the remaining game graph. 
\end{compactenum}
The algorithm stops when $Z$ is empty, and then the remaining states represent
the winning region $W_1(G,\varphi)$ for player~1.
The winning strategy construction in such games is obtained from winning strategies
for reachability objectives and safety objectives. 
An explicit construction of winning strategies in parity games from winning 
strategies for reachability and safety objectives is presented in~\cite{BCWDH10}
(even in the context of partial-information games).

\end{document}